\documentclass[a4paper,onecolumn,11pt,unpublished]{quantumarticle}
\pdfoutput=1
\usepackage{geometry}
\usepackage[english]{babel}
\usepackage{subcaption}
\usepackage{url}
\usepackage[normalem]{ulem}
\usepackage[utf8]{inputenc}
\usepackage[T1]{fontenc}
\usepackage{amsmath,amssymb,amsfonts,amsbsy,amsthm}
\usepackage{graphicx}
\usepackage{braket}
\usepackage{physics}
\usepackage{bm}
\usepackage{hyperref}
\usepackage{xcolor}
\usepackage{tikz}
\usetikzlibrary{quantikz2}
\usetikzlibrary{shapes.geometric}
\hypersetup{
  colorlinks=true,
  linkcolor=blue,
  citecolor=blue,
  urlcolor=blue
}
\usepackage{orcidlink}
\newtheorem{theorem}{Theorem}

\begin{document}
\title{Explicit Block Encodings of Discrete Laplacians with Mixed Boundary Conditions}

\author{Alexandre Boutot\,\orcidlink{0009-0002-9431-1864}}
\email{aboutot@uwaterloo.ca}
\affiliation{Department of Physics and Astronomy, University of Waterloo, Waterloo, ON, Canada, N2L 3G1}

\author{Viraj Dsouza\,\orcidlink{0009-0007-3512-9815}}
\email{virajdanieldsouza@gmail.com}
\affiliation{Quantum Computing Researcher, BQP, New York, USA}



\begin{abstract}
Discrete Laplacian operators arise ubiquitously in scientific computing and frequently appear in quantum algorithms for tasks such as linear algebra, Hamiltonian simulation, and partial differential equations. Block encoding provides the standard method for accessing matrix data within quantum circuits. Efficient implementations of such algorithms require efficient block encodings of the discretized operator. While several general-purpose techniques exist for block encoding arbitrary matrices, they usually require deep quantum circuits. Moreover, existing efficient constructions that exploit Laplacian structure are limited in scope, typically assuming fixed boundary conditions or uniform grid resolutions. In this work, we present a unified framework for efficiently block encoding finite-difference discretizations of the Laplacian that supports Dirichlet, periodic, and Neumann boundary conditions in arbitrary spatial dimensions. Our construction allows different boundary conditions and grid sizes to be specified independently along each coordinate axis, enabling mixed-boundary and anisotropic discretizations within a single modular circuit architecture. We provide analytical gate-complexity estimates and perform circuit-level benchmarks after transpilation to an IBM hardware gate set. Across one-, two-, and three-dimensional examples, the resulting circuits exhibit substantially lower gate counts and higher success probabilities when compared to certain existing approaches.

\end{abstract}

\maketitle


\section{Introduction}

The Laplacian operator plays a central role in a wide range of mathematical and physical models, appearing ubiquitously in partial differential equations (PDEs), spectral graph theory, quantum walks, quantum image processing and Hamiltonian simulation. In numerical settings, the Laplacian is typically discretized using finite-difference or finite-element schemes~\cite{smith1993numerical}, yielding structured, sparse matrices that reflects the dimensionality, grid resolution, and boundary conditions of the underlying domain. While the structured sparsity of the discrete Laplacian enables efficient classical solvers~\cite{krylov} in low dimensions, the size of the discretized operator grows as \(O(N^D)\) for a \(D\)-dimensional domain with \(N\) grid points per dimension. This exponential growth quickly becomes prohibitive for classical computers. 

Quantum computing~\cite{NielsenChuang2002} offers a fundamentally different approach by representing vectors and operators in exponentially large Hilbert spaces using only a polynomial number of qubits, thereby alleviating the memory bottleneck associated with explicitly storing large discretized operators. Within this paradigm, quantum algorithms for linear algebra~\cite{morales2025quantumlinearsolverssurvey} provide a natural framework for manipulating operators such as the Laplacian directly in their matrix form. A central primitive enabling these algorithms is block encoding~\cite{qspLow}, which allows a generally non-unitary matrix to be embedded into a larger unitary operator acting on an extended Hilbert space.

Given a generally non-unitary matrix \( A \), a block encoding embeds \( A \) into a larger unitary operator \( U \) acting on an extended Hilbert space such that
\begin{equation}
U =
\begin{pmatrix}
A / \alpha & * \\
* & *
\end{pmatrix},
\qquad \|A\| \le \alpha ,
\label{eq:block_encoding}
\end{equation}
where \( \alpha \) is the sub-normalization factor and the remaining blocks are irrelevant. Block encodings provide the standard input model for quantum singular value transformation (QSVT)~\cite{Gily_n_2019}, which unifies a broad class of quantum algorithms~\cite{camps2022fable} including Hamiltonian simulation, matrix inversion, phase estimation, unstructured search and the eigen value threshold problem. The efficiency of any such algorithm depends critically on the quality of the block encoding, as quantified by the sub-normalization factor \( \alpha \), the number of ancillary qubits, and the gate complexity required to implement \( U \).

General-purpose techniques for block encoding arbitrary matrices are well established, including QRAM-based constructions~\cite{qram}, linear-combination-of-unitaries (LCU)~\cite{lcu_childs} methods, Fast Approximate Block Encoding (FABLE)~\cite{camps2022fable} and the recent Binary Tree Block Encoding (BITBLE) technique~\cite{BITBLE}. Although these approaches are broadly applicable, they do not explicitly exploit the algebraic structure present in many physically relevant operators. As a result, when applied to structured matrices such as discretized differential operators, their resource requirements can be significantly larger than those achievable with structure-aware constructions~\cite{kuklinski2025efficientblockencodingsrequirestructure}. For sparse matrices, black-box sparse-access models~\cite{Gily_n_2019,low2019hamiltonian} provide an alternative abstraction for block encoding; however, these models do not directly translate into explicit, resource-efficient quantum circuits. As a result, the construction of explicit block encodings for structured sparse matrices has attracted significant attention. For example, in one dimension, explicit constructions have been developed for the discrete Laplacian with periodic and Dirichlet boundary conditions~\cite{camps_explicit_2023}, exploiting its tridiagonal and circulant structure. These ideas have been extended to higher dimensions in special cases, including explicit block encodings for the two-dimensional Dirichlet Laplacian~\cite{S_nderhauf_2024} and for \( D \)-dimensional Laplacians with periodic boundary conditions and uniform grid sizes~\cite{sturm2025efficientexplicitblockencoding}.

In particular, the latter work~\cite{sturm2025efficientexplicitblockencoding} has provided an explicit and resource-efficient block encoding of the \( D \)-dimensional discrete Laplacian with periodic boundary conditions on equal grids, achieving optimal sub-normalization and logarithmic gate complexity in the system size when \( D \) is a power of two. However, its scope is restricted to fully periodic boundary conditions and identical grid resolutions along each spatial dimension. From the perspective of both numerical modeling and physical applications, these restrictions are significant. Discrete Laplacians frequently arise with mixed boundary conditions, such as periodic boundaries along selected directions and Dirichlet or Neumann boundaries along others, as well as with unequal grid sizes reflecting anisotropic spatial resolution.

In this work, we present a generalized framework for block encoding discrete Laplacian operators that systematically overcomes these limitations. Building directly on the explicit periodic construction described above, we extend the methodology to support arbitrary spatial dimension, heterogeneous grid resolutions, and fully mixed boundary conditions within a single, modular circuit architecture. Our approach retains the exactness and favorable scaling of the periodic case while incorporating Dirichlet and Neumann boundary conditions through structured modifications of the encoding. The resulting block encodings are fully explicit, and achieve improved sub-normalization factors and lower number of quantum gates relative to existing approaches.
 
\medskip 
 
The contributions of this paper can be summarized as follows:
\begin{enumerate}
\item  We provide a unified block encoding construction for discrete Laplacians
with Dirichlet, periodic, and Neumann boundary conditions.

\item Our framework supports arbitrary spatial dimensions and allows different boundary conditions and grid sizes to be assigned independently along each coordinate axis, enabling mixed-boundary and anisotropic discretizations.

\item 
We provide both analytical gate-complexity estimates and practical circuit-level benchmarks for our block encoding constructions and existing explicit methods. For practical benchmarks, resource usage is evaluated via \texttt{Qiskit} transpilation for an IBM hardware, in terms of gate counts and depth, while success probabilities are computed via the \texttt{Aer} simulator for representative input states.

\end{enumerate}

The remainder of this paper is organized as follows. In Section~\ref{background} we introduce the discrete Laplacian operator and review the concept of block encoding, fixing notation and conventions used throughout the paper. Section~\ref{explicit_constructions} presents the generalized block encoding construction for Laplacians, starting from $1$ dimension and then generalizing this to arbitrary dimensions, mixed boundary conditions, and unequal grid sizes. Section~\ref{resource_estimation} first derives analytical gate-complexity estimates for our constructions and then presents a comparative study against existing methods based on circuits transpiled for \texttt{ibm\_torino}. Finally, Section~\ref{conclusion} summarizes our findings and discusses potential applications and extensions. All quantum circuits and benchmarking code used in this work are implemented in \texttt{Qiskit} \cite{javadiabhari2024quantumcomputingqiskit} and are made publicly available to facilitate reproducibility and further investigation. The corresponding implementation is available on this \href{https://github.com/TDC28/qamp2025}{\textcolor{blue}{GitHub repo}}~\cite{tdc28_qamp2025}.

\section{Discrete Laplacians and Block Encoding Preliminaries}
\label{background}

\subsection{Finite-Difference Laplacian Operators}
\label{subsec:laplacian}

We begin by introducing the Laplacian operator in its continuous form and then describe its discretization on a finite grid. The discrete Laplacian operators defined here constitute the central objects that are later block encoded into quantum circuits.

Let \( \Omega \subset \mathbb{R}^D \) be a rectangular domain, where \( D \) denotes the spatial dimension. The Laplacian acting on a sufficiently smooth function \( v : \Omega \to \mathbb{R} \) is defined as
\[
\Delta v(\mathbf{x}) = \sum_{d=0}^{D-1} \frac{\partial^2 v(\mathbf{x})}{\partial x_d^2},
\qquad
\mathbf{x} = (x_0, x_1, \dots, x_{D-1}).
\]
The operator is complemented by boundary conditions imposed on \( \partial \Omega \), such as periodic, Dirichlet, or Neumann boundary conditions. These boundary conditions determine how the operator behaves at the domain boundaries and directly influence the algebraic structure of its discrete counterpart.

To obtain a discrete representation, we introduce a uniform Cartesian grid along each coordinate direction. For the \( d \)-th dimension, let
\[
\Omega_d = \{ x^{(d)}_0 + j h_d \mid j = 0,1,\dots,N_d - 1 \}
\]
denote a discretization with grid spacing \( h_d > 0 \) and \( N_d = 2^{n_d} \) grid points.

The full \( D \)-dimensional grid is given by
\[
\Omega = \Omega_0 \times \Omega_1 \times \cdots \times \Omega_{D-1} \subset \mathbb{R}^D,
\]
with total number of grid points
\[
N := \prod_{d=0}^{D-1} N_d .
\]
Functions defined on \( \Omega \) are represented as vectors in \( \mathbb{R}^N \) by fixing an ordering of the grid points.

In one dimension, the second derivative is approximated using the standard second-order central finite-difference scheme,
\[
\frac{\partial^2 v}{\partial x^2}(x_j)
\;\approx\;
\frac{v_{j+1} - 2 v_j + v_{j-1}}{h^2},
\]
where \( h \) denotes the grid spacing. This approximation yields an \( N_d \times N_d \) discrete Laplacian matrix of the form
\[
L^{(N_d)}_{h_d} = \frac{1}{h_d^2}
\begin{bmatrix}
* & * & * & * & \cdots & * \\
1 & -2 & 1 & 0 & \cdots & 0 \\
0 & 1 & -2 & 1 & \cdots & 0 \\
\vdots & \vdots & \vdots & \ddots & \vdots & \vdots \\
0 & 0 & 0 & \cdots & -2 & 1 \\
* & * & * & * & \cdots & *
\end{bmatrix},
\]
where the interior stencil is identical for all boundary conditions, and the first and last rows depend on the imposed boundary conditions.

At the continuous level, the boundary conditions considered in this work are given by
\begin{align*}
\text{Periodic:} \quad & v(0) = v(L), \\
\text{Dirichlet:} \quad & v(0) = v(L) = 0, \\
\text{Neumann:} \quad & \frac{\partial v}{\partial x}(0) = \frac{\partial v}{\partial x}(L) = 0,
\end{align*}
where \( L \) denotes the length of the one-dimensional domain.

Upon discretization, these conditions determine how function values outside the computational grid are handled. Periodic boundary conditions identify values across opposite boundaries, leading to wrap-around couplings between the first and last grid points. Dirichlet boundary conditions fix boundary values and eliminate them from the system, while Neumann boundary conditions enforce vanishing derivatives by reflecting values at the boundary. As a result, only the first and last rows of the discrete Laplacian matrix differ between boundary conditions.

The corresponding one-dimensional discrete Laplacian matrices are given by
\[
L_{\mathrm{p}} = \frac{1}{h_d^2}
\begin{bmatrix}
-2 & 1 & 0 & \cdots & 0 & 1 \\
1 & -2 & 1 & \cdots & 0 & 0 \\
\vdots & \vdots & \vdots & \ddots & \vdots & \vdots \\
0 & 0 & 0 & \cdots & -2 & 1 \\
1 & 0 & 0 & \cdots & 1 & -2
\end{bmatrix},
\quad
L_{\mathrm{d}} = \frac{1}{h_d^2}
\begin{bmatrix}
-2 & 1 & 0 & \cdots & 0 & 0 \\
1 & -2 & 1 & \cdots & 0 & 0 \\
\vdots & \vdots & \vdots & \ddots & \vdots & \vdots \\
0 & 0 & 0 & \cdots & -2 & 1 \\
0 & 0 & 0 & \cdots & 1 & -2
\end{bmatrix},
\]
\[
L_{\mathrm{n}} = \frac{1}{h_d^2}
\begin{bmatrix}
-1 & 1 & 0 & \cdots & 0 & 0 \\
1 & -2 & 1 & \cdots & 0 & 0 \\
\vdots & \vdots & \vdots & \ddots & \vdots & \vdots \\
0 & 0 & 0 & \cdots & -2 & 1 \\
0 & 0 & 0 & \cdots & 1 & -1
\end{bmatrix}.
\]

The discrete Laplacian in \( D \) dimensions is obtained by summing one-dimensional Laplacians acting independently along each coordinate direction. Denoting by \( I^{(N_d)} \) the \( N_d \times N_d \) identity matrix, the full operator \( L \in \mathbb{R}^{N \times N} \) can be written as
\begin{equation}
L =
\sum_{d=0}^{D-1}
\left(
\bigotimes_{k=d+1}^{D-1} I^{(N_k)}
\right)
\otimes
L^{(N_d)}_{h_d}
\otimes
\left(
\bigotimes_{k=0}^{d-1} I^{(N_k)}
\right).
\label{laplace_kron}
\end{equation}
This Kronecker-sum structure reflects the separability of the Laplacian and will be central to the block encoding constructions presented later. Importantly, each one-dimensional operator \( L^{(N_d)}_{h_d} \) may correspond to a different boundary condition and grid spacing, allowing the discrete Laplacian to naturally incorporate mixed boundary conditions and anisotropic discretizations.

For block encoding, the operator must be scaled so that its spectral norm does not exceed unity. We therefore introduce scaled one-dimensional Laplacians. For each boundary condition \( b_d \in \{\mathrm{p},\mathrm{d},\mathrm{n}\} \), 
the spectrum of \( L^{(N_d)}_{h_d,b_d} \) is contained in 
$
\left[-\frac{4}{h_d^2},\,0\right],
$
so that
$
\|L^{(N_d)}_{h_d,b_d}\|_2 \le \frac{4}{h_d^2}.
$
Consequently, we define the scaled $1D$ Laplacian as:
\[
\widetilde{L}^{(N_d)}_{b_d}
:=
\frac{h_d^2}{4}
L^{(N_d)}_{h_d,b_d}.
\]
This scaling removes the grid-spacing dependence and ensures that $
\bigl\|
\widetilde{L}^{(N_d)}_{b_d}
\bigr\|_2
\le 1.$ For the multi-dimensional operator, the largest eigenvalue in magnitude satisfies
\[
\lambda_{\max}
\left(
L^{(D)}_{b_0,\dots,b_{D-1}}
\right)
=
4 \sum_{d=0}^{D-1} \frac{1}{h_d^2}.
\]
We therefore introduce the global scaling factor
\[
\Lambda
:=
\left(
\sum_{d=0}^{D-1} \frac{1}{h_d^2}
\right)^{-1},
\]
and define the scaled multi-dimensional Laplacian
\[
\widetilde{L}^{(D)}_{b_0,\dots,b_{D-1}}
:=
\frac{\Lambda}{4}
L^{(D)}_{b_0,\dots,b_{D-1}}.
\]

Substituting the Kronecker-sum form for $L^{(D)}_{b_0,\dots,b_{D-1}}$ from Eq.~\ref{laplace_kron} yields
\begin{equation}
\widetilde{L}^{(D)}_{b_0,\dots,b_{D-1}}
=
\sum_{d=0}^{D-1}
\left(
\bigotimes_{k=d+1}^{D-1} I^{(N_k)}
\right)
\otimes
\omega_d \,
\widetilde{L}^{(N_d)}_{b_d}
\otimes
\left(
\bigotimes_{k=0}^{d-1} I^{(N_k)}
\right),
\label{eq:ND-scaled}
\end{equation}
where the dimension-dependent weights are
\begin{equation}
\omega_d
:=
\frac{\Lambda}{h_d^2}
=
\frac{1/h_d^2}{\sum_{i=0}^{D-1} 1/h_i^2},
\qquad
\sum_{d=0}^{D-1} \omega_d = 1.
\label{omega_def}
\end{equation}

By construction,
$
\bigl\|
\widetilde{L}^{(D)}_{b_0,\dots,b_{D-1}}
\bigr\|_2
\le 1,
$
so that \( \widetilde{L}^{(D)}_{b_0,\dots,b_{D-1}} \) is directly suitable for block encoding.

\subsection{Block-Encoding}
\label{subsec:blockencoding}

Block encoding provides for a systematic way to represent a generally non-unitary matrix as a sub-block of a larger unitary operator that can be implemented as a quantum circuit.

Let \( A \in \mathbb{C}^{2^n \times 2^n} \) be a matrix acting on an \( n \)-qubit system register. Suppose there exist a scaling factor \( \alpha > 0 \), an error parameter \( \varepsilon \ge 0 \), and a unitary operator
\[
U_A \in \mathbb{C}^{2^{n+m} \times 2^{n+m}},
\]
acting on \( m \) ancilla qubits and \( n \) system qubits such that
\begin{equation}
\left\|
A
-
\alpha
\left(
\bra{0}^{\otimes m} \otimes I
\right)
U_A
\left(
\ket{0}^{\otimes m} \otimes I
\right)
\right\|_2
\le \varepsilon .
\label{eq:block encoding-def}
\end{equation}
Then \( U_A \) is called an \emph{\((\alpha,m,\varepsilon)\)-block encoding} of \( A \). In the special case \( \varepsilon = 0 \), the block encoding is said to be \emph{exact}, and \( U_A \) is referred to as an \emph{\((\alpha,m)\)-block encoding} of \( A \). Throughout this work, we focus exclusively on the construction of \emph{exact} block encodings.

$U_A$ may be equivalently expressed in block-matrix form as
\[
U_A =
\begin{pmatrix}
A / \alpha & * \\
* & *
\end{pmatrix},
\]
where the upper-left block is selected by projecting the ancilla register onto \( \ket{0}^{\otimes m} \), and the remaining blocks are irrelevant.

Operationally, a block encoding is implemented by a quantum circuit acting on \( m \) ancilla qubits initialized in the state \( \ket{0}^{\otimes m} \) and \( n \) system qubits initialized in an arbitrary input state \( \ket{\psi} \).

The action of the unitary \( U_A \) on the joint input state can be written as
\[
U_A
\left(
\ket{0}^{\otimes m} \otimes \ket{\psi}
\right)
=
\ket{0}^{\otimes m} \otimes \left( \frac{A}{\alpha} \ket{\psi} \right)
+
\sum_{s \neq 0} \ket{s} \otimes \ket{\phi_s},
\]
where \( \{ \ket{s} \} \) ranges over all computational basis states of the ancilla register except \( \ket{0}^{\otimes m} \), and \( \ket{\phi_s} \) are unnormalized states of the system register.

Upon measuring the ancilla qubits and postselecting on the outcome \( \ket{0}^{\otimes m} \), the system register is projected onto a state proportional to \( A \ket{\psi} \). The probability of this event is given by
\[
p_{\mathrm{succ}} = \frac{\| A \ket{\psi} \|^2}{\alpha^2},
\]
which we refer to as the \emph{block encoding success probability}. The sub-normalization factor \( \alpha \) directly controls the efficiency of postselection based procedures and thus plays a central role in the performance of algorithms built on block encodings.

\section{Generalized Block Encoding of Multi-Dimensional Laplacians}
\label{explicit_constructions}

In this section, we present quantum circuits to block encode Laplacian operators with periodic, Dirichlet, or  Neumann boundary conditions. We then generalize our approach to N-dimensional Laplacian operators. Proof of correctness for each case is also provided.

\subsection{Block encoding a $1D$ Laplacian}

\subsubsection{Periodic boundary conditions}

For block encoding the $D-$dimensional Laplacian with periodic boundary conditions across all dimensions, an efficient approach has already been suggested in~\cite{sturm2025efficientexplicitblockencoding}. For completion, we just state the theorem here and for complete proof refer the readers to~\cite{sturm2025efficientexplicitblockencoding}. 

First, we define the shift operators $S^-$ and $S^+$ acting on an $n$-qubit register encoding computational basis states
$\ket{j}$, $j=0,\dots,N-1$, as
\[
S^- \ket{j} = \ket{j-1 \!\!\!\!\pmod{N}}, 
\qquad
S^+ \ket{j} = \ket{j+1 \!\!\!\!\pmod{N}}.
\]
These operators implement cyclic left and right shifts of the discrete grid index.

\begin{theorem}[Block encoding of the 1D periodic Laplacian]
Let $U_p$ denote the unitary implemented by the quantum circuit shown in Fig.\ref{fig:1DP_circuit}, acting on $n$ system qubits and $m=2$ ancilla qubits. Then $U_p$ is an exact $(1,2,0)$-block encoding of the scaled one-dimensional periodic Laplacian 
$\widetilde{L}^{(1)}_{\mathrm{p}}$, i.e.,
\[
\left( \bra{0}^{\otimes 2} \otimes I \right)
U_p
\left( \ket{0}^{\otimes 2} \otimes I \right)
=
\widetilde{L}^{(1)}_{\mathrm{p}} .
\]

\begin{figure}[h]
\centering
\scalebox{1.4}{
\begin{quantikz}
    \lstick{$\ket{j}$} & \qwbundle{n} & & \gate{S^-} & \gate{S^+} & & & \\
    \lstick{$\ket{l_0}$} & \gate{H} & \gate{Z} & & \ctrl{-1} & \gate{H} & \meter{} & \\
    \lstick{$\ket{l_1}$} & \gate{H} & \gate{Z} & \ctrl[open]{-2} & & \gate{H} & \meter{} &
\end{quantikz}
}
\caption{Block encoding circuit for the 1D Laplacian with periodic boundary, as provided in~\cite{sturm2025efficientexplicitblockencoding}.}
\label{fig:1DP_circuit}
\end{figure}
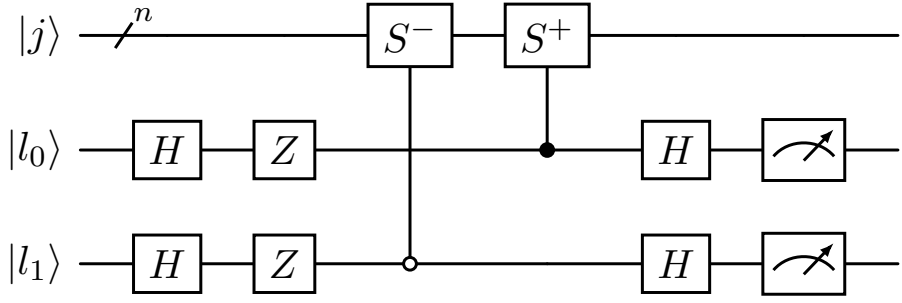

Equivalently, $U_p$ has the block form
\[
U_p
=
\begin{pmatrix}
\widetilde{L}^{(1)}_{\mathrm{p}} & * \\
* & *
\end{pmatrix}.
\]
Consequently, when the ancilla register is initialized in $\ket{0}^{\otimes 2}$ and postselected on the outcome $\ket{0}^{\otimes 2}$, the induced transformation on the system register is $\widetilde{L}^{(1)}_{\mathrm{p}}$.
\end{theorem}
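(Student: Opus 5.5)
The plan is to compute the ancilla-projected block $\left(\bra{0}^{\otimes 2}\otimes I\right)U_p\left(\ket{0}^{\otimes 2}\otimes I\right)$ directly, tracking the ancilla register through the circuit of Fig.~\ref{fig:1DP_circuit}. The key point is that every gate in the circuit is either a single-qubit gate on the ancillas or a system shift controlled on an ancilla computational-basis value. So the whole unitary is a sum over ancilla basis states $\ket{a,b}$ (with $a$ the value of $l_0$ and $b$ the value of $l_1$) of products of shift operators on the system.

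\textbf{Step 1: prepare the ancillas.} Starting from $\ket{0}\ket{0}$, the layer $H\otimes H$ produces $\tfrac12\sum_{a,b\in\{0,1\}}\ket{a}\ket{b}$. The subsequent $Z\otimes Z$ attaches the phase $(-1)^{a+b}$ to each branch.

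\textbf{Step 2: apply the controlled shifts.} In the branch $\ket{a,b}$, the shift $S^-$ is applied iff $b=0$ (open control on $l_1$), and $S^+$ is applied iff $a=1$. The system therefore acquires the operator $(S^+)^{a}(S^-)^{1-b}$. These two operators commute, since both are powers of the same cyclic shift, so the order in the circuit is irrelevant.

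\textbf{Step 3: project the ancillas back.} The final $H\otimes H$ followed by projection onto $\bra{0}\bra{0}$ contributes another factor $\tfrac12$ uniformly over all branches, because $\bra{0}H\ket{a}=1/\sqrt2$. Hence
\[
\left(\bra{0}^{\otimes 2}\otimes I\right)U_p\left(\ket{0}^{\otimes 2}\otimes I\right)
=\frac14\sum_{a,b}(-1)^{a+b}(S^+)^{a}(S^-)^{1-b}.
\]
Evaluating the four terms gives:
\begin{itemize}
\item $(a,b)=(0,0)$: the term is $+S^-$;
\item $(a,b)=(0,1)$: the term is $-I$;
\item $(a,b)=(1,0)$: the term is $-S^+S^-=-I$;
\item $(a,b)=(1,1)$: the term is $+S^+$.
\end{itemize}
The block is therefore $\tfrac14\left(S^++S^--2I\right)$.

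\textbf{Step 4: identify the matrix.} The last step is to check that $S^++S^--2I$ is exactly $h^2L_{\mathrm p}$. In the computational basis, $S^\pm$ have ones on the super- and subdiagonals, and the cyclic wrap-around produces the corner entries $(0,N-1)$ and $(N-1,0)$. Since $\widetilde L_{\mathrm p}=\tfrac{h^2}{4}L_{\mathrm p}$, the block equals $\widetilde L^{(1)}_{\mathrm p}$ with subnormalization $\alpha=1$ and two ancillas, as claimed.

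The only step requiring care is the sign and phase bookkeeping in Steps 2 and 3. One must ensure that the open control is read correctly, so that $S^-$ fires on $b=0$ and not $b=1$. One must also confirm that the $Z$ phases combine with the control pattern to give $-2I$ rather than, say, $+2I$ or a cancellation of the shifts. The rest of the argument is routine.
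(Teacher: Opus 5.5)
Your proof is correct. The paper does not prove this theorem itself; it only states it and defers to Sturm and Schillo. The closest arguments in the paper are the Dirichlet and Neumann proofs. Those use the same mechanism you do: Hadamards and $Z$ phases on $\ell_0,\ell_1$, then the controlled shifts, then Hadamards, then projection onto $\ket{0}^{\otimes 2}$. The difference is that they evaluate it basis state by basis state, with separate cases for interior $j$ and for $j=0$, $j=N-1$.

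Your version works at the operator level and is the standard linear-combination-of-unitaries argument. Here $H\otimes H$ prepares uniform weights, and the $Z$ phases together with the controlled shifts select signed shift operators, giving the block
\[
\tfrac14\sum_{a,b}(-1)^{a+b}(S^+)^a(S^-)^{1-b}=\tfrac14(S^++S^--2I).
\]
In the periodic case this buys you a proof with no case split, since no gate depends on $j$. The corner entries $(0,N-1)$ and $(N-1,0)$ come directly from the modular shifts. It also handles the degenerate case $N=2$, where $S^+=S^-$ and the off-diagonal entries double, which agrees with the periodic stencil.

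The paper's per-basis-state bookkeeping is only needed once the $j$-dependent $\mathrm{del}$ flips are added. Even then your formula extends directly. Each branch $(a,b)$ picks up a right factor $(I-\Pi_{a,b})$, where $\Pi_{a,b}$ projects onto the input indices flagged in that branch.
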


\subsubsection{Dirichlet boundary conditions}

\begin{theorem}[Block encoding of the 1D Dirichlet Laplacian]
Let $U_{\mathrm{d}}$ denote the unitary implemented by the circuit shown in Fig.~\ref{fig:1DD_circuit},
acting on $n$ system qubits, two ancilla qubits $\ell_0,\ell_1$, and one
additional ancilla qubit $\mathrm{del}$. Then $U_{\mathrm{d}}$ is an exact
$(1,3,0)$-block encoding of the scaled one-dimensional Dirichlet Laplacian
$\widetilde{L}^{(1)}_{\mathrm{d}}$, i.e.,
\[
(\bra{0}^{\otimes 3} \otimes I)
U_{\mathrm{d}}
(\ket{0}^{\otimes 3} \otimes I)
=
\widetilde{L}^{(1)}_{\mathrm{d}} .
\]

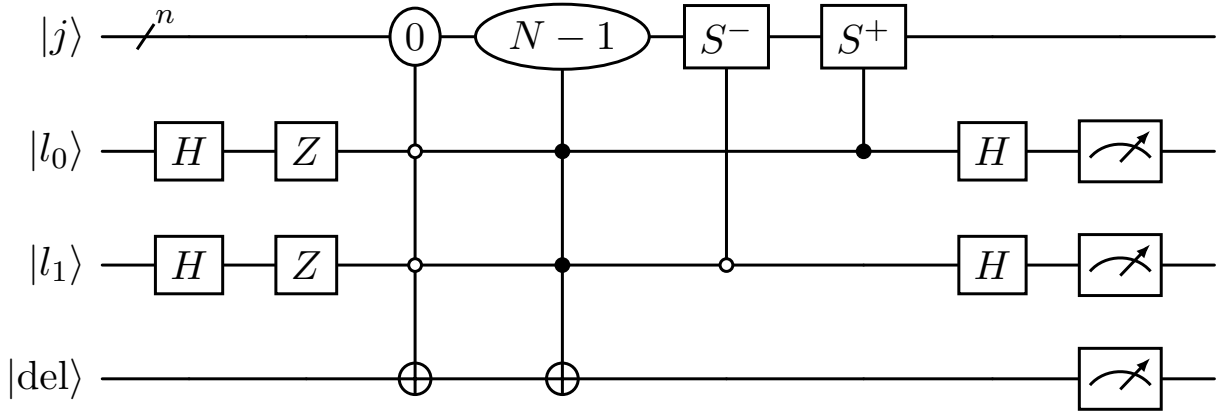
\begin{figure}[h]
\centering
\scalebox{1.4}{
\begin{quantikz}
\lstick{$\ket{j}$} & \qwbundle{n} & & \gate[style={shape=ellipse, fill=white, inner sep=-2pt}]{0}\wire[d]{q} & \gate[style={shape=ellipse, fill=white, inner sep=-2pt}]{N-1}\wire[d]{q} &\gate{S^-}&\gate{S^+} & & & \\
\lstick{$\ket{l_0}$} & \gate{H} & \gate{Z} & \ctrl[open]{1} & \ctrl{1} & & \ctrl{-1} & \gate{H} & \meter{} & \\
\lstick{$\ket{l_1}$} & \gate{H} & \gate{Z} & \ctrl[open]{1} & \ctrl{1} & \ctrl[open]{-2} && \gate{H} & \meter{} & \\
\lstick{$\ket{\text{del}}$} & & & \targ{} & \targ{} & & & & \meter{} &
\end{quantikz}
}
\caption{Block encoding circuit for the 1D Laplacian with Dirichlet boundary.}
\label{fig:1DD_circuit}
\end{figure}

\end{theorem}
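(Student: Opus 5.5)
The plan is a direct computation of $(\bra{0}^{\otimes 3}\otimes I)\,U_{\mathrm{d}}\,(\ket{0}^{\otimes 3}\otimes I)$ on each computational basis state $\ket{j}$. I will track the four branches of the $(\ell_0,\ell_1)$ register separately, mirroring the periodic argument behind the 1D periodic theorem. Since $U_{\mathrm{d}}$ is a product of unitary gates it is unitary, so only the top-left block needs to be identified. The scale factor is $1$ because $\widetilde{L}^{(1)}_{\mathrm{d}}$ is already normalized.

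\textbf{Step 1 (state preparation).} After $H$ and then $Z$ on each of $\ell_0,\ell_1$, the input $\ket{0}\ket{0}\ket{0}_{\mathrm{del}}\ket{j}$ becomes
\[
\tfrac12\sum_{\ell_0,\ell_1\in\{0,1\}}(-1)^{\ell_0+\ell_1}\ket{\ell_0\ell_1}\ket{0}_{\mathrm{del}}\ket{j}.
\]

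\textbf{Step 2 (marking the boundary branches).} The two multi-controlled NOTs act before the shifts, so their controls refer to the original index $j$. The first flips $\mathrm{del}$ exactly in the branch $(\ell_0,\ell_1)=(0,0)$ with $j=0$. The second flips it exactly in the branch $(1,1)$ with $j=N-1$.

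\textbf{Step 3 (shifts).} $S^-$ is applied when $\ell_1=0$ and $S^+$ when $\ell_0=1$. The four branches therefore become:
\begin{itemize}
\item $(0,0)$: $+\ket{j-1}$;
\item $(1,0)$: $-S^+S^-\ket{j}=-\ket{j}$;
\item $(0,1)$: $-\ket{j}$;
\item $(1,1)$: $+\ket{j+1}$.
\end{itemize}
All indices are taken mod $N$.

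\textbf{Step 4 (projection).} The final $H\otimes H$ followed by projection onto $\ket{00}$ contributes an amplitude $\tfrac12$ to each branch. Projecting $\mathrm{del}$ onto $\ket{0}$ then removes precisely the branches flagged in Step 2. Dropping them gives
\[
(\bra{0}^{\otimes 3}\otimes I)\,U_{\mathrm{d}}\,\ket{0}^{\otimes 3}\ket{j}
=\tfrac14\Bigl([j\neq 0]\ket{j-1}-2\ket{j}+[j\neq N-1]\ket{j+1}\Bigr).
\]
The flagged terms are exactly the wrap-around contributions $\ket{0}\mapsto\ket{N-1}$ and $\ket{N-1}\mapsto\ket{0}$. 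Without Steps 2 and 4 one would recover $\tfrac14(\ket{j-1}-2\ket{j}+\ket{j+1})=\widetilde{L}^{(1)}_{\mathrm{p}}\ket{j}$, the periodic result.

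\textbf{Step 5 (identification).} Reading off matrix elements, the result equals $\tfrac{h^2}{4}L_{\mathrm{d}}$ column by column, which is $\widetilde{L}^{(1)}_{\mathrm{d}}$.

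The only delicate point is the bookkeeping in Step 2: one must verify that each controlled flip of $\mathrm{del}$ fires in exactly one branch, namely the one whose subsequent shift wraps around the boundary. It must never fire in the $(1,0)$ or $(0,1)$ branches, which produce the diagonal term. This holds because the control patterns $(0,0)$ and $(1,1)$ are disjoint from those branches, and because the flips precede the shifts. After this check the remainder is routine linear algebra.
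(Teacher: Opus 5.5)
Your proposal is correct and takes essentially the same approach as the paper. Both arguments track the four $(\ell_0,\ell_1)$ branches on a basis state $\ket{j}$, observe that the $\mathrm{del}$ flips fire only in the $(0,0)$ branch when $j=0$ and in the $(1,1)$ branch when $j=N-1$ (exactly the wrap-around shifts), and read off $\tfrac14(\ket{j-1}-2\ket{j}+\ket{j+1})$ with those terms removed; the only difference is that you treat all $j$ uniformly with indicator factors, while the paper splits into interior points (deferred to the periodic case) and the two boundary cases computed explicitly.
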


\begin{proof}

We verify the block encoding relation on computational basis states
$\ket{j}$, $j=0,\dots,N-1$.

For interior points $1 \le j \le N-2$, the additional boundary-controlled
operations are inactive. The circuit therefore reduces exactly to the
periodic construction shown previously , and we obtain
\[
\ket{0}^{\otimes 3}\ket{j}
\longmapsto
\ket{0}^{\otimes 3}
\widetilde{L}^{(1)}_{\mathrm{d}}\ket{j}
+
\sum_{s\neq 0}
\ket{s}\ket{\phi_s(j)},
\]
where
\[
\widetilde{L}^{(1)}_{\mathrm{d}}\ket{j}
=
\frac14
\left(
\ket{j-1} - 2\ket{j} + \ket{j+1}
\right).
\]

It therefore remains to verify the boundary cases $j=0$ and $j=N-1$.

\medskip
\noindent
\noindent
\textbf{Case $j=0$:}

We adopt the register ordering
$\ket{\mathrm{del}}\ket{\ell}\ket{j}$. After Hadamard gates on the $\ell$ register and application of $Z$ gates, the joint state is
\[
\frac12
\ket{0} \left(
\ket{00} - \ket{01} - \ket{10} + \ket{11}
\right)
\ket{0}.
\]

The boundary-controlled operation flips the $\mathrm{del}$ ancilla precisely for the component corresponding to $j=0$ and $\ell=00$, yielding
\[
\frac12
\left(
\ket{1}\ket{00}
-\ket{0}\ket{01}
-\ket{0}\ket{10}
+\ket{0}\ket{11}
\right)
\ket{0}.
\]

After the controlled shift operations $S^\pm$, we obtain
\[
\frac12
\left(
\ket{1}\ket{00}\ket{N-1}
- \ket{0}\ket{01}\ket{0}
- \ket{0}\ket{10}\ket{0}
+ \ket{0}\ket{11}\ket{1}
\right).
\]

Finally, after the final Hadamards on the $\ell$ register, the component associated with
$\ket{0}^{\otimes 3}$ in the ancilla registers is
\[
\frac14
\ket{0}^{\otimes 3}
\left(
-2\ket{0} + \ket{1}
\right).
\]

All remaining components are orthogonal to $\ket{0}^{\otimes 3}$
in the ancilla space. Therefore,
\[
(\bra{0}^{\otimes 3}\otimes I)
U_{\mathrm{d}}
(\ket{0}^{\otimes 3}\otimes \ket{0})
=
\frac14
\left(
-2\ket{0} + \ket{1}
\right)
= \widetilde{L}^{(1)}_{\mathrm{d}}\ket{0}.
\]

\medskip
\noindent
\textbf{Case $j=N-1$:}

An analogous computation shows that the boundary controlled operation now
activates for the component with $\ell=11$, leading after the shift
operations and uncomputation to
\[
(\bra{0}^{\otimes 3}\otimes I)
U_{\mathrm{d}}
(\ket{0}^{\otimes 3}\otimes \ket{N-1})
=
\frac14
\left(
\ket{N-2} - 2\ket{N-1}
\right)
= \widetilde{L}^{(1)}_{\mathrm{d}}\ket{N-1}.
\]
\medskip
Since the action agrees with
$\widetilde{L}^{(1)}_{\mathrm{d}}$ on all basis vectors and the
construction is exact, the stated block encoding relation follows.
\end{proof}

\subsubsection{ Neumann boundary conditions}

\begin{theorem}[Block encoding of the 1D Neumann Laplacian]
Let $U_{\mathrm{n}}$ denote the unitary implemented by the circuit shown in Fig.~\ref{fig:1DV_circuit},
acting on $n$ system qubits, two ancilla qubits $\ell_0,\ell_1$, and one
additional ancilla qubit $\mathrm{del}$. Then $U_{\mathrm{n}}$ is an exact
$(1,3,0)$-block encoding of the scaled one-dimensional Neumann Laplacian
$\widetilde{L}^{(1)}_{\mathrm{n}}$, i.e.,
\[
(\bra{0}^{\otimes 3} \otimes I)
U_{\mathrm{n}}
(\ket{0}^{\otimes 3} \otimes I)
=
\widetilde{L}^{(1)}_{\mathrm{n}} .
\]

\begin{figure}[h]
\centering
\scalebox{1.2}{
\begin{quantikz}
\lstick{$\ket{j}$} & \qwbundle{n} & & \gate[style={shape=ellipse, fill=white, inner sep=-2pt}]{0}\wire[d]{q} & \gate[style={shape=ellipse, fill=white, inner sep=-2pt}]{0}\wire[d]{q} & \gate[style={shape=ellipse, fill=white, inner sep=-2pt}]{N-1}\wire[d]{q} & \gate[style={shape=ellipse, fill=white, inner sep=-2pt}]{N-1}\wire[d]{q} &\gate{S^-}&\gate{S^+} & & & \\
\lstick{$\ket{l_0}$} & \gate{H} & \gate{Z} & \ctrl[open]{1} & \ctrl{1} & \ctrl{1} & \ctrl{1} & & \ctrl{-1} & \gate{H} & \meter{} & \\
\lstick{$\ket{l_1}$} & \gate{H} & \gate{Z} & \ctrl[open]{1} & \ctrl[open]{1} & \ctrl{1} & \ctrl[open]{1} & \ctrl[open]{-2} && \gate{H} & \meter{} & \\
\lstick{$\ket{\text{del}}$} & & & \targ{} & \targ{} & \targ{} & \targ{} & & & & \meter{} &
\end{quantikz}
}
\caption{Block encoding circuit for the 1D Laplacian with  Neumann boundary.}
\label{fig:1DV_circuit}
\end{figure}
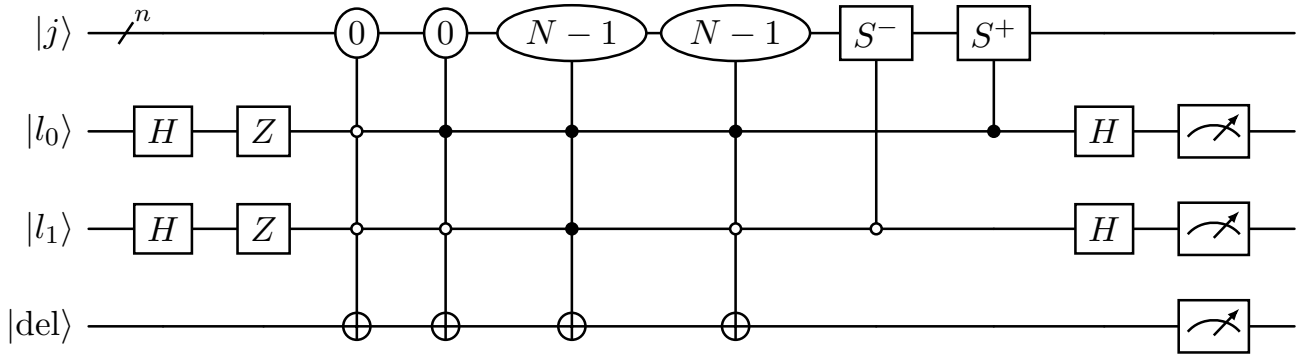

\end{theorem}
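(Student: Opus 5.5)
The plan is to follow the proof of the Dirichlet theorem closely: verify the block-encoding relation on each computational basis state $\ket{j}$, and exploit the fact that the $\mathrm{del}$ ancilla is touched only by the boundary-controlled Toffoli-type gates. Those gates act before the shifts, are conditioned on the original index $j$, and $\mathrm{del}$ is never acted on again. So any branch in which $\mathrm{del}$ is flipped to $\ket{1}$ stays orthogonal to $\ket{0}^{\otimes 3}$ for the rest of the circuit. Postselecting on $\mathrm{del}=0$ therefore simply deletes those branches from the periodic stencil of Fig.~\ref{fig:1DP_circuit}.

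First I will record the branch bookkeeping of the periodic circuit, which the theorem on the 1D periodic Laplacian guarantees. After $H^{\otimes 2}$ and $Z^{\otimes 2}$, the $\ell=(\ell_0,\ell_1)$ register is $\tfrac12\sum_{\ell}(-1)^{\ell_0+\ell_1}\ket{\ell_0\ell_1}$. The shift $S^-$ fires when $\ell_1=0$ and $S^+$ fires when $\ell_0=1$. Hence the four branches map $\ket{j}$ as follows:
\[
\ket{00}\mapsto+\ket{j-1},\quad \ket{01}\mapsto-\ket{j},\quad \ket{10}\mapsto-\ket{j},\quad \ket{11}\mapsto+\ket{j+1}.
\]
The final Hadamards followed by projection onto $\ket{00}$ contribute a further factor $\tfrac12$. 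Each surviving branch thus contributes its amplitude times $\tfrac12$ to the $\ket{0}^{\otimes 3}$ block, which reproduces $\tfrac14(\ket{j-1}-2\ket{j}+\ket{j+1})$. For interior $1\le j\le N-2$ none of the four controlled-NOTs is active, so the circuit reduces to the periodic one and gives the interior row of $\widetilde{L}^{(1)}_{\mathrm{n}}$.

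Next I will treat the two boundary cases. Throughout I assume $N\ge 2$, so that $0\neq N-1$ and at most one pair of boundary gates is active for a given $j$.

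\textbf{Case $j=0$.} The gates controlled on $j=0$ flip $\mathrm{del}$ on the branches $\ell=00$ and $\ell=10$. These are exactly the wrap-around term $+\ket{N-1}$ and one copy of $-\ket{0}$. The surviving branches give $\tfrac14(-\ket{0}+\ket{1})$, which is the first column of $\widetilde{L}^{(1)}_{\mathrm{n}}$.

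\textbf{Case $j=N-1$.} The gates controlled on $j=N-1$ flip $\mathrm{del}$ on the branches $\ell=11$ and $\ell=10$. These remove the wrap-around term $+\ket{0}$ and one copy of $-\ket{N-1}$, leaving $\tfrac14(\ket{N-2}-\ket{N-1})$.

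Since the agreement then holds on all basis vectors, linearity yields the claimed block-encoding identity.

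The main obstacle is purely a bookkeeping one: reading the control polarities in Fig.~\ref{fig:1DV_circuit} correctly. In particular, I must confirm that both the second and fourth controlled-NOTs select $(\ell_0,\ell_1)=(1,0)$, which is a branch that applies $S^-S^+=I$ and so carries a diagonal $-\ket{j}$ term. If instead one of them selected $\ell=01$, the result would be unchanged, because that branch also maps to $-\ket{j}$. The essential point to check is therefore only that exactly one diagonal branch and the single wrap-around branch are removed at each boundary.
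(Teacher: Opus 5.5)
Your proposal is correct and follows essentially the same route as the paper's proof. Both verify the relation on computational basis states, reduce interior $j$ to the periodic stencil, and at $j=0$ and $j=N-1$ use the $\mathrm{del}$ flips to remove exactly the wrap-around branch and one diagonal $-\ket{j}$ branch. Your reading of the control polarities, with both the second and fourth gates selecting $(\ell_0,\ell_1)=(1,0)$, matches the figure. Framing the boundary cases as branch deletion from the periodic circuit is just a cleaner way of presenting the paper's explicit state computation.
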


\begin{proof}
We verify the block encoding relation on computational basis states
$\ket{j}$, $j=0,\dots,N-1$. 

\medskip

For $1 \le j \le N-2$, none of the boundary comparator circuits are activated.  The construction therefore reduces to the periodic interior stencil,
\[
\widetilde{L}^{(1)}_{\mathrm{n}}\ket{j}
=
\frac14\bigl(\ket{j-1}-2\ket{j}+\ket{j+1}\bigr),
\]
and the block encoding relation follows exactly as in the periodic case.

\medskip
\noindent
\textbf{Case $j=0$:}

After Hadamard gates on the $\ell$ register and application of $Z$ gates, the joint state is
\[
\frac12
\ket{0} \bigl(
\ket{00}-\ket{01}-\ket{10}+\ket{11}
\bigr)
\ket{0}.
\]

Applying the controlled shifts gives
\[
\frac12
\bigl(
\ket{1}\ket{00}\ket{N-1}
-\ket{1}\ket{01}\ket{0}
-\ket{0}\ket{10}\ket{0}
+\ket{0}\ket{11}\ket{1}
\bigr).
\]

Uncomputing the $\ell$ register and extracting only the component
associated with $\ket{0}^{\otimes 3}$ in the ancilla space yields
\[
\frac14
\ket{0}^{\otimes 3}
\bigl(
-\ket{0}+\ket{1}
\bigr).
\]

All remaining components lie in subspaces orthogonal to
$\ket{0}^{\otimes 3}$.  Hence
\[
(\bra{0}^{\otimes 3}\otimes I)
U_{\mathrm{n}}
(\ket{0}^{\otimes 3}\otimes\ket{0})
=
\frac14
\bigl(
-\ket{0}+\ket{1}
\bigr)
= \widetilde{L}^{(1)}_{\mathrm{n}}\ket{0}.
\]

\medskip
\noindent
\textbf{Case $j=N-1$:}

An analogous calculation shows that for $j=N-1$ the last two boundary
comparators are activated, and projection onto
$\ket{0}^{\otimes 3}$ yields
\[
(\bra{0}^{\otimes 3}\otimes I)
U_{\mathrm{n}}
(\ket{0}^{\otimes 3}\otimes\ket{N-1})
=
\frac14
\bigl(
\ket{N-2}-\ket{N-1}
\bigr)
=\widetilde{L}^{(1)}_{\mathrm{n}}\ket{N-1}.
\]
\medskip
Since the action agrees with
$\widetilde{L}^{(1)}_{\mathrm{n}}$
on all computational basis states and the construction is exact,
the block encoding relation follows.
\end{proof}

\medskip
\noindent
\textbf{Remark (Success probability):}
All three constructions (periodic, Dirichlet, and Neumann) are exact
$(1,m,0)$-block encodings of the corresponding scaled Laplacian
$\widetilde{L}^{(1)}_{b}$ with the same scaling
\(
\widetilde{L}^{(1)}_{b} = \frac{h^2}{4} L^{(1)}_{h,b}.
\)
Consequently, for any normalized state $\ket{v}$, the success probability of postselecting the
ancilla register in $\ket{0}^{\otimes m}$ is
\[
p_{\mathrm{succ}}
=
\|\widetilde{L}^{(1)}_{b} \ket{v}\|_2^2 .
\]

This scales as
\[
p_{\mathrm{succ}}
\sim
\frac{h^4}{16}\|L^{(1)}_b \ket{v}\|_2^2,
\]
independently of the chosen boundary condition.

\subsection{Block encoding N-dimensional Laplacian operators}

Recall that the scaled D-dimensional Laplacian operator is given by 
$$\widetilde{L}^{(D)}_{b_0,\dots,b_{D-1}}
=
\sum_{d=0}^{D-1}
\left(
\bigotimes_{k=d+1}^{D-1} I^{(N_k)}
\right)
\otimes
\omega_d \,
\widetilde{L}^{(N_d)}_{b_d}
\otimes
\left(
\bigotimes_{k=0}^{d-1} I^{(N_k)}
\right).$$
Extending the 1-dimensional approaches presented above to this case is easy due to the Kronecker-sum structure of $\widetilde{L}^{(D)}$. We take advantage of that structure to block encode N-dimensional Laplacian operators by adding a new register containing $\lceil \log D\rceil$ qubits, which we call the $\ket{k}$ register. On this register, we prepare the state
\begin{equation}
U_{prep\_k}\ket{k} \;=\; \sum_{d=0}^{D-1}\sqrt{\omega_d}\,\ket{d},
\end{equation}
where $\omega_d$ has been defined in Eq.~\ref{omega_def} and depends on grid spacing along each dimension. 

We add $\ket{j}$ registers, indexed as $\ket{j^{(d)}}$, so as to have one for each of the $D$ dimensions. The registers in the quantum circuit are then given and ordered by $\ket{k} \ket{\text{del}} \ket{l} \ket{j^{(D-1)}}\ket{j^{(D-2)}} \dots\ket{j^0}$. \\

\begin{theorem}[Block encoding of $D-$dimensional Laplacian] Let $U^{\mathrm{D}}$ denote the unitary implemented by the circuit shown in Fig.~\ref{fig:DD_circuit},
acting on $n$ system qubits, $3+\lceil \log D\rceil$ ancilla qubits $\ell_0,\ell_1$. Then $U^{\mathrm{D}}$ is an exact
$(1,3+\lceil \log D\rceil,0)$-block encoding of the scaled $D-$dimensional Laplacian
$\widetilde{L}^{(D)}_{b_0,\dots,b_{D-1}}$, i.e.,
\[
(\bra{0}^{\otimes (3+\lceil \log D\rceil)} \otimes I)
U^{\mathrm{D}}
(\ket{0}^{\otimes (3+\lceil \log D\rceil)} \otimes I)
=
\widetilde{L}^{(D)}_{b_1,\dots,b_D} .
\]

\begin{figure}[h]
\centering
\scalebox{0.75}{
\begin{quantikz}
    \lstick{$\ket{j^{(0)}}$} &&&\gate[wires=8]{U_{b_0}}&\gate{S^-}&\gate{S^+}&&&&&&&&&\\
    \lstick{$\ket{j^{(1)}}$} &&&&& &\gate[wires=7]{U_{b_1}}& \gate{S^-}&\gate{S^+}&&&&&&\\
    \lstick \vdots \\
    \lstick \vdots \\
    \lstick{$\ket{j^{(D-1)}}$} &&&&&&&&& &\gate[wires=4]{U_{b_{D-1}}} & \gate{S^-}&\gate{S^+}&& \\ \lstick{$\ket{l_0}$}&\gate{H}&\gate{Z}&&&\ctrl{-5}\wire[d]{q}&&&\ctrl{-4}\wire[d]{q} &\cdots && &\ctrl{-1} &\gate{H}&\meter{}\\
    \lstick{$\ket{l_1}$}&\gate{H}&\gate{Z}&&\ctrl[open]{-6}\wire[d]{q}&\wire[d]{q}&&\ctrl[open]{-5}\wire[d]{q}
    &\wire[d]{q}&\cdots&&\ctrl[open]{-2}\wire[d]{q}&\wire[u]{q}\wire[d]{q}&\gate{H}&\meter{}\\
    \lstick{$\ket{\text{del}}$}&&&&\wire[d]{q}&\wire[d]{q}&&\wire[d]{q}&\wire[d]{q}&\cdots&\wire[d]{q}&&&&\meter{}\\
    \lstick{$\ket{k}$}&\gate{U_{prep\_k}}&&\gate[style={shape=ellipse, fill=white, inner sep=-2pt}]{0}\wire[u]{q}&\gate[style={shape=ellipse, fill=white, inner sep=-2pt}]{0}&\gate[style={shape=ellipse, fill=white, inner sep=-2pt}]{0}&\gate[style={shape=ellipse, fill=white, inner sep=-2pt}]{1}\wire[u]{q}&\gate[style={shape=ellipse, fill=white, inner sep=-2pt}]{1}&\gate[style={shape=ellipse, fill=white, inner sep=-2pt}]{1} &\cdots &\gate[style={shape=ellipse, fill=white, inner sep=-2pt}]{D-1}\wire[u]{q}&\gate[style={shape=ellipse, fill=white, inner sep=-2pt}]{D-1}\wire[u]{q}&\gate[style={shape=ellipse, fill=white, inner sep=-2pt}]{D-1}\wire[u]{q}&\gate{U_{prep\_k}^\dagger}&\meter{}\\
\end{quantikz}
}
\caption{Block encoding circuit for the $D-$dimensional Laplacian with boundaries $b_0, b_1, \cdots b_{D-1}$ where each $b_i$ can be Dirichlet, periodic or  Neumann. The circuits for each $U_{b_i}$ can be inferred from the 1D constructions provided earlier. For example, if a dimension $m$ has a periodic boundary, $U_{b_m}$ is just the Identity matrix.}
\label{fig:DD_circuit}
\end{figure}
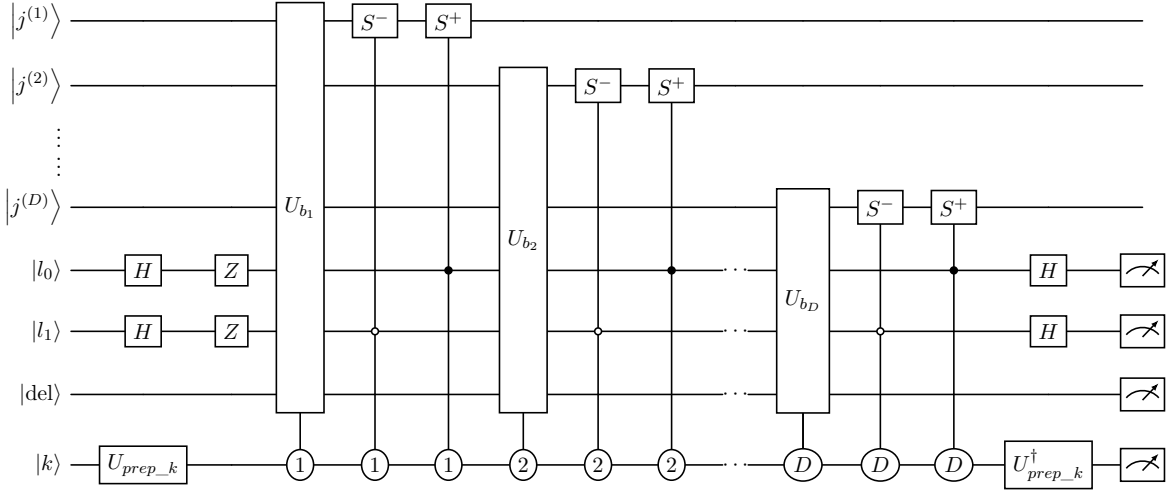
\end{theorem}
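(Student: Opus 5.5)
The proof follows the standard linear-combination-of-unitaries (LCU) template: prepare weights on $\ket{k}$, apply a controlled selection of the 1D block encodings, unprepare $\ket{k}$, and project.

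\textbf{Step 1: the 1D blocks inside the circuit.} For fixed $d$, the gates of Fig.~\ref{fig:DD_circuit} that are conditioned on $\ket{k}=\ket{d}$ are the boundary comparator $U_{b_d}$ acting on $\ket{j^{(d)}}$ and $\ket{\text{del}}$, followed by the $\ell$-controlled shifts $S^\pm$ on $\ket{j^{(d)}}$. Together with the shared $H\cdot Z$ and $H$ layers on $\ell$, this is exactly the circuit $U_{\mathrm{p}}$, $U_{\mathrm{d}}$ or $U_{\mathrm{n}}$ from the 1D theorems, acting on register $j^{(d)}$ and tensored with identity on the other $j^{(d')}$.

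The layers on $\ell$ are not themselves controlled on $k$. Because $k$ is a computational-basis control, however, the whole middle section decomposes as
\[
W=\sum_{d}\ket{d}\bra{d}\otimes V_d .
\]
Here $V_d$ contains only the gates controlled on $d$; gates controlled on other values act trivially on this branch. Hence $(H Z)^\dagger$-conjugation commutes through and gives
\[
(I\otimes H_\ell)\,W\,(I\otimes H_\ell Z_\ell)=\sum_d \ket{d}\bra{d}\otimes U_{b_d}^{(d)} ,
\]
where $U_{b_d}^{(d)}$ is the 1D block encoding on register $j^{(d)}$.

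One detail needs care. In the periodic case there is no $\text{del}$ register, so $\text{del}$ stays $\ket0$. In the Dirichlet and Neumann cases, the 1D theorems used $\text{del}$ together with $\ell$. In both cases the ancilla projection $\bra{0}_{\text{del}}\bra{00}_\ell$ gives $\widetilde{L}^{(N_d)}_{b_d}$ on $j^{(d)}$.

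\textbf{Step 2: LCU assembly.} Start from $\ket{0}_k\ket{0}_{\text{del}}\ket{00}_\ell\ket{\psi}$. After $U_{prep\_k}$ and the controlled section, the state is
\[
\sum_d\sqrt{\omega_d}\,\ket{d}\otimes U_{b_d}^{(d)}\ket{0}\ket{00}\ket\psi .
\]
Apply $U_{prep\_k}^\dagger$ and project onto $\bra{0}_k$. Since $\bra{0}U_{prep\_k}^\dagger\ket d=\sqrt{\omega_d}$ (real amplitudes), this projection gives $\sum_d \omega_d\, U_{b_d}^{(d)}$. The remaining projection onto $\ket{0}_{\text{del}}\ket{00}_\ell$ then yields
\[
\sum_d \omega_d\,\bigl(I\otimes\cdots\otimes\widetilde L^{(N_d)}_{b_d}\otimes\cdots\otimes I\bigr),
\]
with the identity factors on every $j^{(d')}$, $d'\neq d$. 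This is exactly Eq.~\eqref{eq:ND-scaled}, with register ordering matching the Kronecker ordering. Because $\sum_d\omega_d=1$ and each 1D encoding has $\alpha=1$, the subnormalization is $1$.

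\textbf{Main obstacle.} The delicate step is Step 1: justifying that sharing the $\ell$ and $\text{del}$ ancillas across dimensions is legitimate. The concern is that, within one $k$-branch, gates labeled by other $d'$ must act trivially. This is exactly what the comparator ovals on $\ket{k}$ enforce, but it must be checked for every gate, including each comparator inside $U_{b_{d'}}$. To settle it, I would write each gate $G$ in the middle section as
\[
\ket{d'}\bra{d'}\otimes G' + (I-\ket{d'}\bra{d'})\otimes I ,
\]
and then multiply these out. A second point to check is that $U_{prep\_k}$ prepares real nonnegative amplitudes, so that the overlap is $\omega_d$ and not $|\sqrt{\omega_d}|^2$ with a phase. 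Finally, if $D$ is not a power of two, the unused basis states $\ket{d}$ with $d\ge D$ receive zero amplitude from $U_{prep\_k}$ and therefore contribute nothing.
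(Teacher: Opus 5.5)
Your proposal is correct and follows the same LCU-style argument as the paper: prepare $\sqrt{\omega_d}$ on $\ket{k}$, apply the $k$-controlled 1D encodings, unprepare, and project onto $\ket{0}$. Your block-diagonal decomposition $W=\sum_d \ket{d}\bra{d}\otimes V_d$ makes explicit the ancilla-sharing step that the paper only asserts.

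There are two small slips. Since the circuit applies $H$ before $Z$ on $\ell$, the right-hand factor should be $Z_\ell H_\ell$, not $H_\ell Z_\ell$; with $H_\ell Z_\ell$ the $Z$ layer acts trivially on $\ket{00}$ and the signs producing the $-2$ diagonal entry are lost. Also, your phase concern about $U_{prep\_k}$ is moot, because the coefficient is $|\bra{d}U_{prep\_k}\ket{0}|^2=\omega_d$ whatever the phase.
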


\begin{proof}
    Similar to the 1D cases, we verify the block encoding relation on computational basis states $\ket{j^{(d)}}$. After the application on $U_{prep\_k}$ on the $\ket{k}$ register, $H$ and $   Z$ gates on $\ket{l}=\ket{l_1l_0}$ registers, the joint state is
    $$\sum_{l=0}^{3}\sum_{d=0}^{D-1} (-1)^{l_0 + l_1} \sqrt{\omega_d} \ket{d}\ket{0}\ket{l}\ket{j^{(D-1)}}\ket{j^{(D-2)}}...\ket{j^{(0)}}.$$ \\

    The $U_{b_d}$ and $S^{+/-}$ operators controlled to state $d=0, 1, \dots, D-1$ iteratively apply the 1-dimensional scaled Laplacian operator with the boundary condition corresponding to the choice for dimension $d$. After their application, the joint state is
\begin{align*}
        \sqrt{\omega_{0}}\ket{0} \ket{0} \ket{0} \ket{j^{(D-1)}} &\dots \ket{j^{(1)}} \tilde{L}^{(N_{0})}_{b_{0}} \ket{j^{(0)}} \\
        + \sqrt{\omega_{1}} \ket{1} \ket{0} \ket{0} \ket{j^{(D-1)}} &\dots \tilde{L}^{(N_{1})}_{b_{1}} \ket{j^{(1)}} \ket{j^{(0)}} \\
        &\vdots \hspace{3.5cm} \\
        + \sqrt{\omega_{D-1}} \ket{D-1} \ket{0} \ket{0} \tilde{L}^{(N_{D-1})}_{b_{D-1}} &\ket{j^{(D-1)}} \dots \ket{j^{(1)}} \ket{j^{(0)}} \\
        + \sum_{d=0}^{D-1} \alpha_{d10}\ket{d} \ket{1} \ket{0} \ket{\psi^{(D-1)}_{10}}...\ket{\psi^{(0)}_{10}} + \sum_{d=0}^{D-1}&\sum_{r=0}^1\sum_{l=1}^3 \alpha_{drl}\ket{d}\ket{r}\ket{l}\ket{\psi^{(D-1)}_{rl}}...\ket{\psi^{(0)}_{rl}}.
    \end{align*}

    Finally, $U_{prep\_k}^\dagger$ is applied to the $\ket{k}$ register, where the joint state becomes

    \begin{align*}
        \omega_{0} \ket{0} \ket{0} \ket{0} \ket{j^{(D-1)}} & \dots \ket{j^{(1)}} \tilde{L}^{(N_{0})}_{b_{0}} \ket{j^{(0)}} \\
        + \omega_{1} \ket{0} \ket{0} \ket{0} \ket{j^{(D-1)}} & \dots \tilde{L}^{(N_{1})}_{b_{1}} \ket{j^{(1)}} \ket{j^{(0)}}\\
        &\vdots \hspace{3.5cm}\\
        + \omega_{D-1} \ket{0} \ket{0} \ket{0} \tilde{L}^{(N_{D-1})}_{b_{D-1}} &\ket{j^{(D-1)}} \dots \ket{j^{(1)}} \ket{j^{(0)}} \\
        + \sum_{d=0}^{D-1
        } \beta_{d10}\ket{d} \ket{1} \ket{0} \ket{\psi^{(D-1)}_{10}}...\ket{\psi^{(0)}_{10}}
        + &\sum_{d=0}^{D-1}\sum_{r=0}^1\sum_{l=1}^3 \beta_{drl}\ket{d}\ket{r}\ket{l}\ket{\psi^{(D-1)}_{rl}}...\ket{\psi^{(0)}_{rl}}.
    \end{align*}
    \begin{align*}
        = \ket{0} \ket{0} \ket{0} & \left( \sum_{d=0}^{D-1}
        \left[ \bigotimes_{k=d+1}^{D-1} I^{(N_k)} \right]
        \otimes \omega_d \tilde{L}_{b_d}^{(N_d)} \otimes
        \left[ \bigotimes_{k=0}^{d-1} I^{(N_k)} \right] \right)
        \left(\ket{j^{(D-1)}} \dots \ket{j^{(1)}} \ket{j^{(0)}} \right)  \\
        &+ \sum_{d=0}^{D-1} \beta_{d10}\ket{d} \ket{1} \ket{0} \ket{\psi^{(D-1)}_{10}}...\ket{\psi^{(0)}_{10}} + \sum_{d=0}^{D-1}\sum_{r=0}^1\sum_{l=1}^3 \beta_{drl}\ket{d}\ket{r}\ket{l}\ket{\psi^{(D-1)}_{rl}}...\ket{\psi^{(0)}_{rl}}.
    \end{align*}

    After a $\ket{0} \ket{0} \ket{0}$ measurement, the state in the $j$ registers collapses to
    $$\left(\sum_{d=0}^{D-1}
        \left[ \bigotimes_{k=d+1}^{D-1} I^{(N_k)} \right]
        \otimes \omega_d \tilde{L}_{b_d}^{(N_d)} \otimes
        \left[ \bigotimes_{k=0}^{d} I^{(N_k)} \right]\right) \left(\ket{j^{(D-1)}} \dots \ket{j^{(1)}} \ket{j^{(0)}} \right). $$
\end{proof}

For realistic systems, the $\ket{k}$ register will contain at most $2$ qubits. The below implementation in Fig. \ref{uprep_circuit} for $U_{prep\_k}$ can be used which is optimal~\cite{grover2002creatingsuperpositionscorrespondefficiently}. For the special cases $D=1$ and $D=2$, the $\ket{k}$ register contains only a single qubit and therefore only two amplitudes need to be prepared. In this case, the state preparation reduces to only the $R_y$.

\begin{figure}[htbp]
\centering
\begin{quantikz}
\lstick{$\ket{k_0}$} & \gate{R_Y(\theta_0)} & \ctrl[open]{1}& \ctrl{1} & \\
\lstick{$\ket{k_1}$} & & \gate{R_Y(\theta_1)} & \gate{R_Y(\theta_2)} &
\end{quantikz}
\caption{State preparation circuit ($U_{prep\_k}$) for $D=3,4$. The rotation angles for the RY and controlled-RY gates are $\theta_0= 2\arccos{(\sqrt{\omega_0 + \omega_2})}$,
$\theta_1=2\arccos{(\sqrt\frac{\omega_0}{\omega_0 + \omega_2})}$,
$\theta_2=2\arccos{(\sqrt\frac{\omega_1}{\omega_1 + \omega_3})}$.}
\label{uprep_circuit}
\end{figure}

\section{Resource Estimates and Comparisons}
\label{resource_estimation}

\subsection{Theoretical resource estimates}
We briefly analyze the gate complexity of the proposed block encoding circuit, starting with the
$1D$ Laplacian with Neumann boundary conditions acting on an
$N=2^n$ dimensional Hilbert space.  The system register consists of
$n$ qubits and the construction uses three additional ancilla qubits. 
Our goal is to express the implementation cost in the Clifford+$T$ basis,
which is the standard metric for fault-tolerant quantum computation.

The circuit contains three types of components:
(i) single-qubit Clifford gates that act on ancillae registers - which doesn't scale with $n$,
(ii) a constant number of multi-controlled NOT operations whose number of
controls scales with $n$, and
(iii) controlled arithmetic corresponding to increment and decrement by one
(modulo $2^n$) on the $j$ register.

We focus on the second and third components, and explicitly count only the T-gates involved as they dominate the cost in fault tolerant quantum computing. For (ii), we note that an $m$-controlled $X$ can be implemented using $2m-3$ Toffoli gates given two
clean ancillas~\cite{KhattarGidney2025}. Each Toffoli can be compiled into Clifford+$T$ using the standard 7-$T$
decomposition~\cite{NielsenChuang2002}.
Hence, the $T$ count per MCX is $7(2m-3).$ Substituting $m=n+2$ and since four such MCX gates are required, the total $T$ count from the MCX gates become $56n + 28$. We note that different synthesis strategies may modify the constant prefactor multiplying $n$~\cite{barenco_1995, mcu_2025, selinger_2015, Gidney2015ControlledNots, rel_phase_toffoli}. However, all known constructions preserve linear scaling, which is the only property required for the complexity claims in this work.


In (iii),  conditioned on the ancilla qubits, the circuit performs addition or subtraction of one modulo $2^n$ on the $j$ register. Quantum addition circuits are known to admit linear Clifford+$T$ cost in the register size. Introducing an additional control qubit does not change the asymptotic behavior. Implementing an incrementer ($+1\mod n$) can be done using $3n$ Toffolis using atmost $5$ ancillas~\cite{KhattarGidney2025}. Compiling each Toffoli into 7 $T$ gates gives and doing this for both the increment and decrement operations, gives the total $T$ count as $42n$.  As above, constant factors depend on the particular adder design and the Toffoli synthesis technique employed~\cite{adders_1996, gidney2018halving, draper2004}.

Summing all non-Clifford contributions and recalling that $N = 2^n$, the total $T-$count scales as,
\begin{equation}
\Theta(\log N).
\end{equation}

\medskip

We now consider the $D$-dimensional Laplacian with Neumann boundaries along all the dimensions, under the assumption of equal
grid spacing across dimensions.
The system register decomposes into $D$ coordinate registers
$j^{(r)}$ of sizes $n_r$ qubits.
Let
\(
n_{\mathrm{tot}} = \sum_{r=1}^{D} n_r
\),
so that the matrix dimension is $N = 2^{n_{\mathrm{tot}}}$.

A selector register $k$ of size
\(
d = \lceil \log_2 D \rceil
\)
is prepared in a uniform superposition using Hadamards, and uncomputed at
the end of the procedure.
These are Clifford operations and therefore contribute no $T$ cost.

For each dimension $r$, the same boundary checking logic and controlled increment and decrement used in the one-dimensional construction is applied to
$j^{(r)}$, conditioned on the predicate $k=r$.
Implementing this condition adds $d$ additional controls to every
multi-controlled operation.
The necessary bit flips to convert the equality test into the necessary controls are Clifford and do not affect the $T$ count.
Each MCX now has $m' = n_r + 2 + d$ controls. The number of Toffolis is now $2m'-3$~\cite{KhattarGidney2025}. Compiling each Toffoli into 7 $T$ gates~\cite{NielsenChuang2002} and since four such $MCX$ operations are required for every $r$, the total $T$ count is found to be $56 n_r + 56 d + 28.$

As before, the incrementer requires $3n_r$ Toffolis. Each of these are controlled by $(d+1)$ control qubits. Since $d+3$ controlled $X$ can be implemented using $2(d+3)-3$ Toffolis, one can deduce that for the controlled incrementer, the Toffoli count scales as $O(n_rd)$. Accounting for both increment and decrement the total $T$ gate count still scales as $O(n_rd)$.
Summing over all dimensions and using $n_{\mathrm{tot}} = \sum_r n_r$, the total $T-$count scales as,
\begin{equation}
O(n_{\mathrm{tot}} d) = O(\log N \log D).
\end{equation}

As in the one-dimensional case, alternative synthesis choices may change constant prefactors. Therefore, the fault-tolerant cost of implementing the block encoding grows
only logarithmically both in the matrix size and the number of dimensions. This establishes the efficiency of the construction from the perspective of logical resource requirements. 

The Dirichlet and periodic variants use only a subset of the operations required in the Neumann construction. Consequently, their Clifford+$T$ cost is upper bounded by the result derived above, and the linear scaling in $n$ remains unchanged. For unequal grid sizes, the state preparation can instead be implemented using the circuit shown in Fig.~\ref{uprep_circuit}, which generalizes to arbitrary $D$, where $d=\lceil \log_2 D \rceil$. For $D \le 4$, which is sufficient for most systems, the $T$-count of this construction can be bounded by a constant independent of $n$. For general $D$, however, it is not known whether such state preparation can be implemented using $O(d)$ gates, although techniques such as in Ref.~\cite{Low_2024} seems promising.

\subsection{Comparative circuit implementation benchmarks}

To complement the analytical gate complexity estimates presented above, we perform circuit-level benchmarks of the proposed block encoding constructions and compare them with several existing explicit implementations, including the approaches in Camps \textit{et al.}~\cite{camps_explicit_2023}, Sünderhauf \textit{et al.}~\cite{S_nderhauf_2024}, FABLE~\cite{camps2022fable}, and BITBLE~\cite{BITBLE} where applicable.

All circuits are generated using \texttt{Qiskit} and transpiled to the native gate set of an IBM quantum processor- \texttt{ibm\_torino}. Hardware resource requirements are evaluated in terms of circuit depth, total gate count, and the number of two-qubit gates after transpilation. In addition, we evaluate the success probability of each block encoding construction using the \texttt{Aer} statevector simulator.

For the success-probability experiments, we consider representative normalized input vectors obtained by discretizing smooth test functions. Specifically, we use the functions
\begin{equation}
v_{1\mathrm{D}}(x) = \sin(2\pi x), \qquad
v_{2\mathrm{D}}(x,y) = \sin\!\big(2\pi(x+y)\big), \qquad
v_{3\mathrm{D}}(x,y,z) = \sin\!\big(2\pi(x+y+z)\big),
\label{eq:input_states}
\end{equation}
which are evaluated on the corresponding computational grids and normalized prior to state preparation. The benchmarks are performed for discrete Laplacian operators in one, two, and three spatial dimensions under various boundary conditions. For completeness, matrices encoded by the proposed circuits obtained by extracting the top-left block of the implemented block encoding unitary, is shown in Appendix~\ref{app:matrix_structure}.

\subsubsection{$1$D Laplacians}

For Dirichlet and periodic boundary conditions shown in Fig.~\ref{fig:1D_C_D} and Fig.~\ref{fig:1D_C_P} respectively, the proposed construction exhibits resource scaling comparable to that of~\cite{camps_explicit_2023}. In contrast, the explicit constructions produced by~\cite{camps2022fable}, and~\cite{BITBLE} exhibit significantly steeper scaling in all resource metrics, resulting in substantially larger circuits as the problem size increases. For Neumann boundary conditions shown in Fig.~\ref{fig:1D_FB_N}, the implementations in~\cite{camps2022fable}, and~\cite{BITBLE} require substantially larger circuits at increasing matrix dimensions.

Fig.~\ref{fig:SP_1D} shows the success probability of the different block encoding constructions evaluated using representative input states from~\eqref{eq:input_states}. While the success probability decreases with system size for all methods, the construction proposed in this work consistently achieves higher success probabilities across the tested problem sizes.

\begin{figure}[htbp]
    \centering
    \includegraphics[width=1\linewidth]{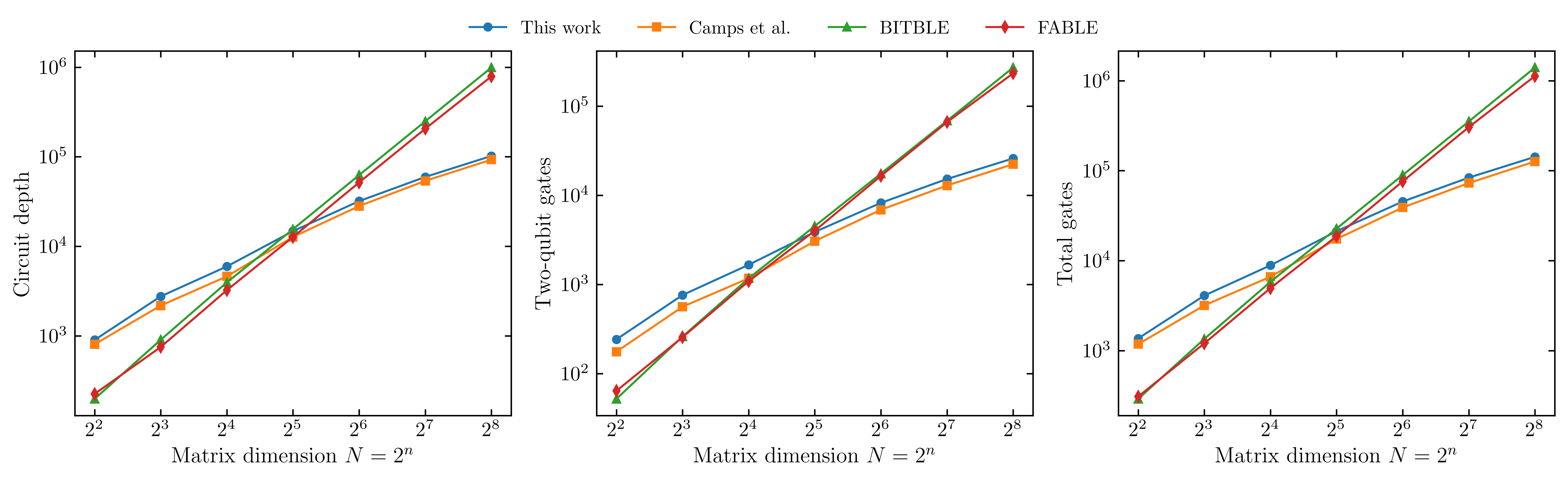}
    \caption{Comparison of resources for block encoding constructions of the $1$D discrete Laplacian with Dirichlet boundary conditions}
    \label{fig:1D_C_D}
\end{figure}

\begin{figure}[htbp]
    \centering
    \includegraphics[width=1\linewidth]{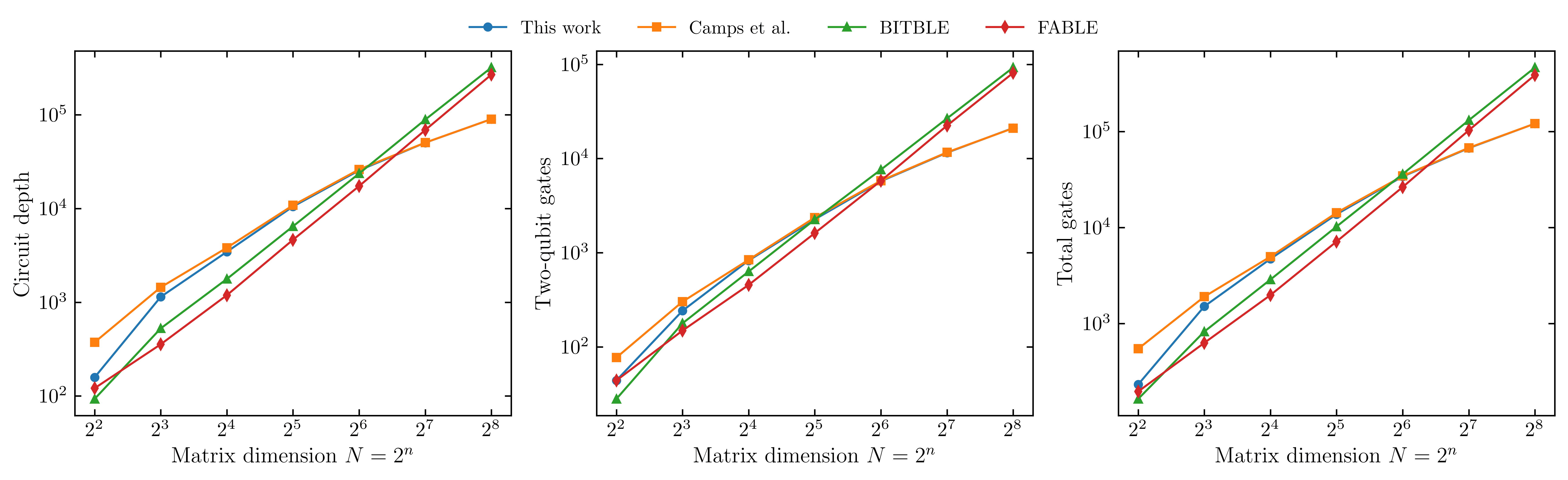}
    \caption{Comparison of resources for block encoding constructions of the $1$D discrete Laplacian with periodic boundary conditions}
    \label{fig:1D_C_P}
\end{figure}

\begin{figure}[htbp]
    \centering
    \includegraphics[width=1\linewidth]{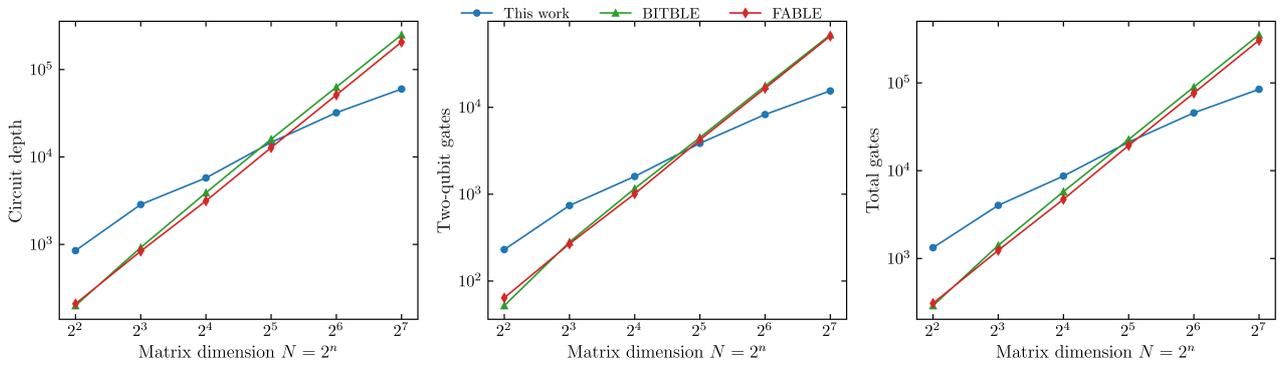}
    \caption{Comparison of resources for block encoding constructions of the $1$D discrete Laplacian with Neumann boundary conditions.
}
    \label{fig:1D_FB_N}
\end{figure}

\begin{figure}[htbp]
    \centering
    \includegraphics[width=1\linewidth]{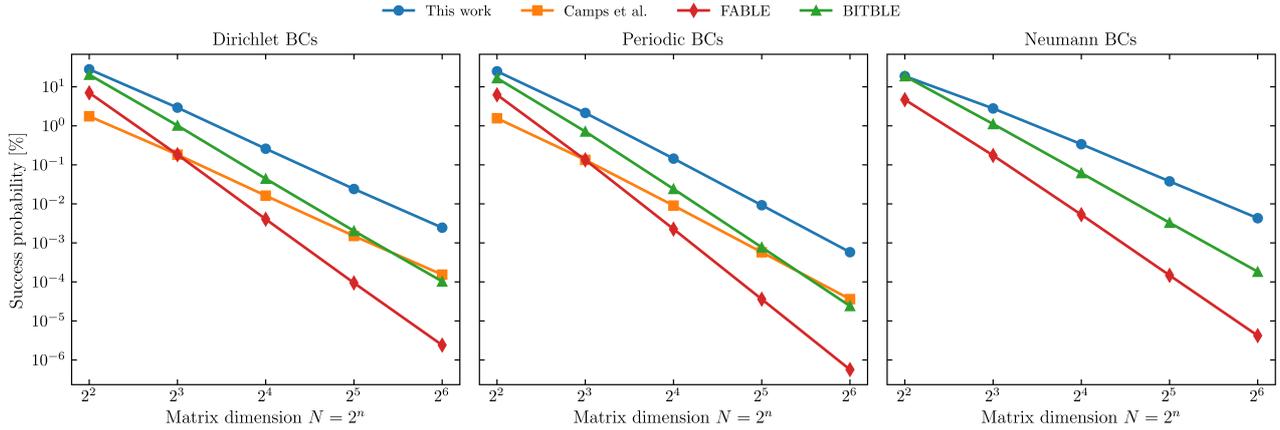}
    \caption{Success probability of different block encoding constructions for the $1$D discrete Laplacian under Dirichlet, periodic, and Neumann boundary conditions.
}
    \label{fig:SP_1D}
\end{figure}

\subsubsection{$2$D Laplacians}

For the Dirichlet--Dirichlet case shown in Fig.~\ref{fig:2D_S_DD}, the proposed construction exhibits substantially lower circuit depth, two-qubit gate counts, and total gate counts compared with the constructions in~\cite{S_nderhauf_2024},~\cite{camps2022fable}, and~\cite{BITBLE}. Fig.~\ref{fig:SP_2D_DD}  reports the corresponding success probabilities. The construction proposed in this work maintains higher success probabilities than the alternative approaches, with significantly slower decay rate.

Appendix~\ref{app:be_additional}, shows similar results with mixed periodic--Neumann boundary conditions.

\begin{figure}[htbp]
    \centering
    \includegraphics[width=\linewidth]{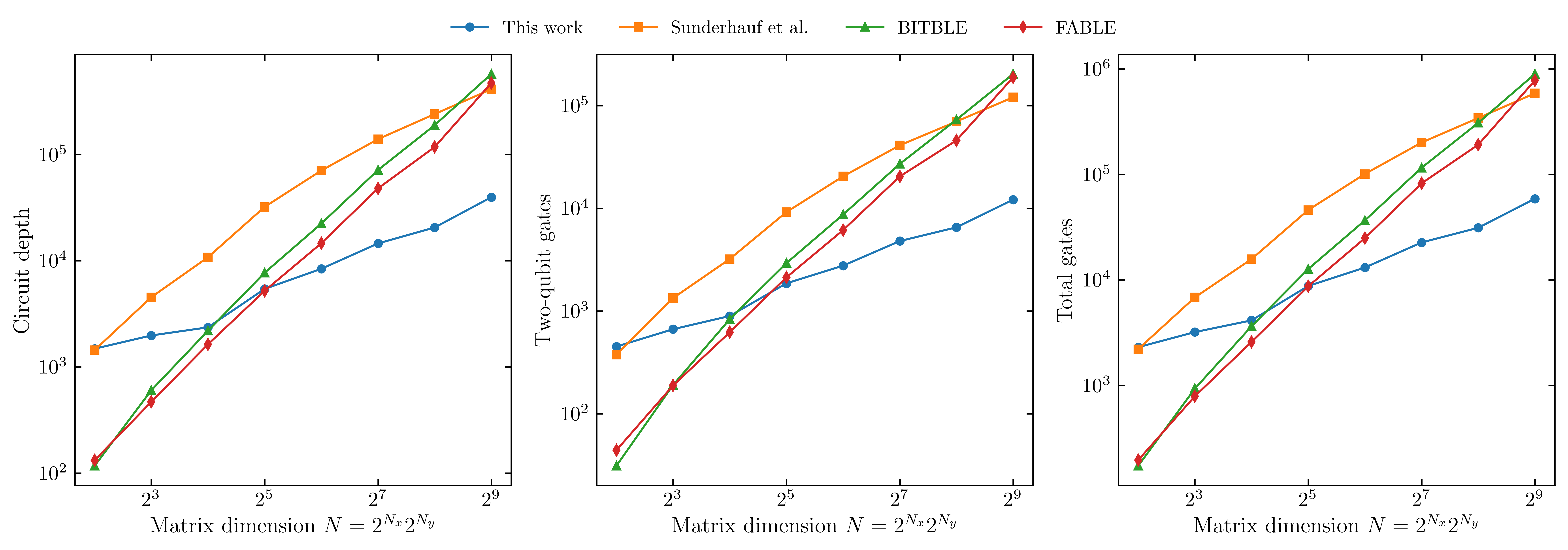}
    \caption{Comparison of resources for block encoding constructions of the $2$D Laplacian with Dirichlet boundary conditions applied along both spatial directions.}
    \label{fig:2D_S_DD}
\end{figure}

\begin{figure}[htbp]
    \centering
    \includegraphics[width=0.6\linewidth]{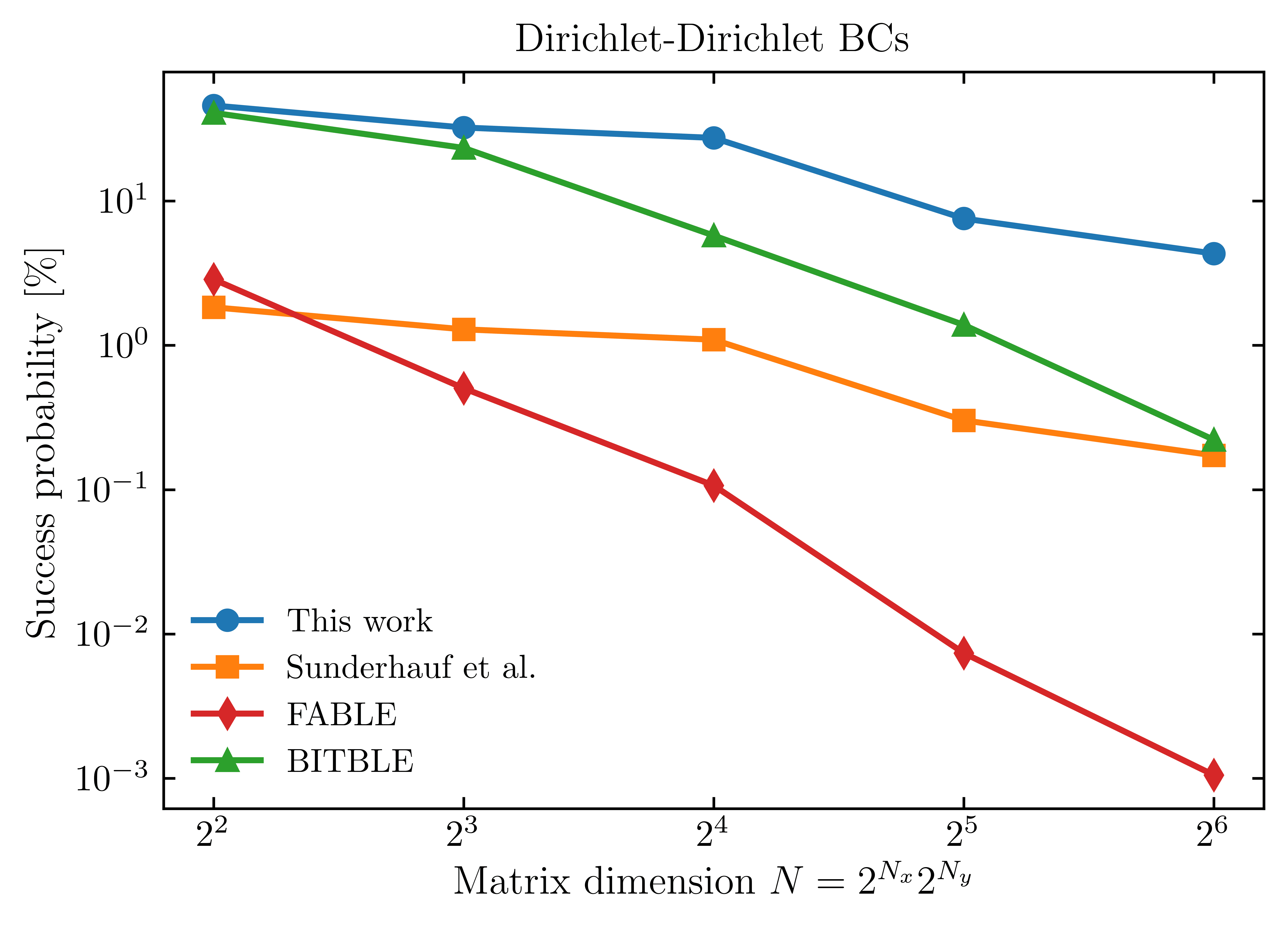}
    \caption{Success probability of block encoding constructions for the $2$D Laplacian with Dirichlet boundary conditions applied along both spatial directions.}
    \label{fig:SP_2D_DD}
\end{figure}

\subsubsection{$3$D Laplacians}

Finally, we illustrate the behavior of the proposed construction for a $3$D Laplacian operator. As a representative example, we consider a Laplacian with periodic, Dirichlet, and Neumann boundary conditions applied along the three spatial directions.

The results in Fig.~\ref{fig:3D_FB_PDN} demonstrate that the proposed construction yields significantly smaller circuits compared with the explicit block encoding circuits of~\cite{camps2022fable}, and~\cite{BITBLE}. The gap between the methods becomes more pronounced as the problem dimension grows, reflecting the advantage of exploiting the structured form of the Laplacian operator rather than explicitly encoding the full matrix.

Fig.~\ref{fig:SP_3D_PDN} shows the corresponding success probabilities for the same example. As in the lower-dimensional cases, the proposed method consistently maintains higher success probabilities and exhibits a significantly slower decay rates.

We emphasize that this three-dimensional case serves as an illustrative example and the proposed framework naturally extends to higher dimensions and supports heterogeneous boundary conditions without substantial increases in circuit complexity.

\begin{figure}[htbp]
    \centering
    \includegraphics[width=\linewidth]{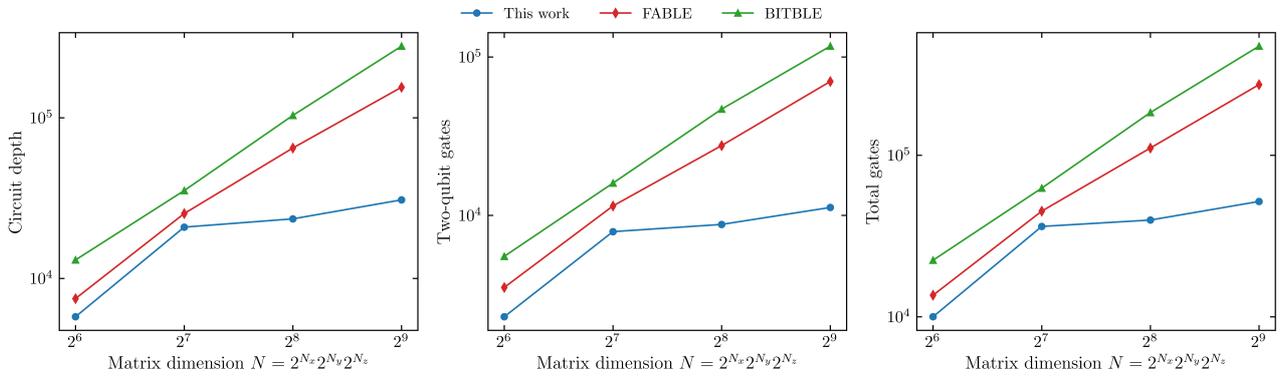}
    \caption{Comparison of resources for block encoding constructions of a $3$D Laplacian with periodic, Dirichlet, and Neumann boundary conditions applied along the three spatial directions.}
    \label{fig:3D_FB_PDN}
\end{figure}

\begin{figure}[htbp]
    \centering
    \includegraphics[width=0.6\linewidth]{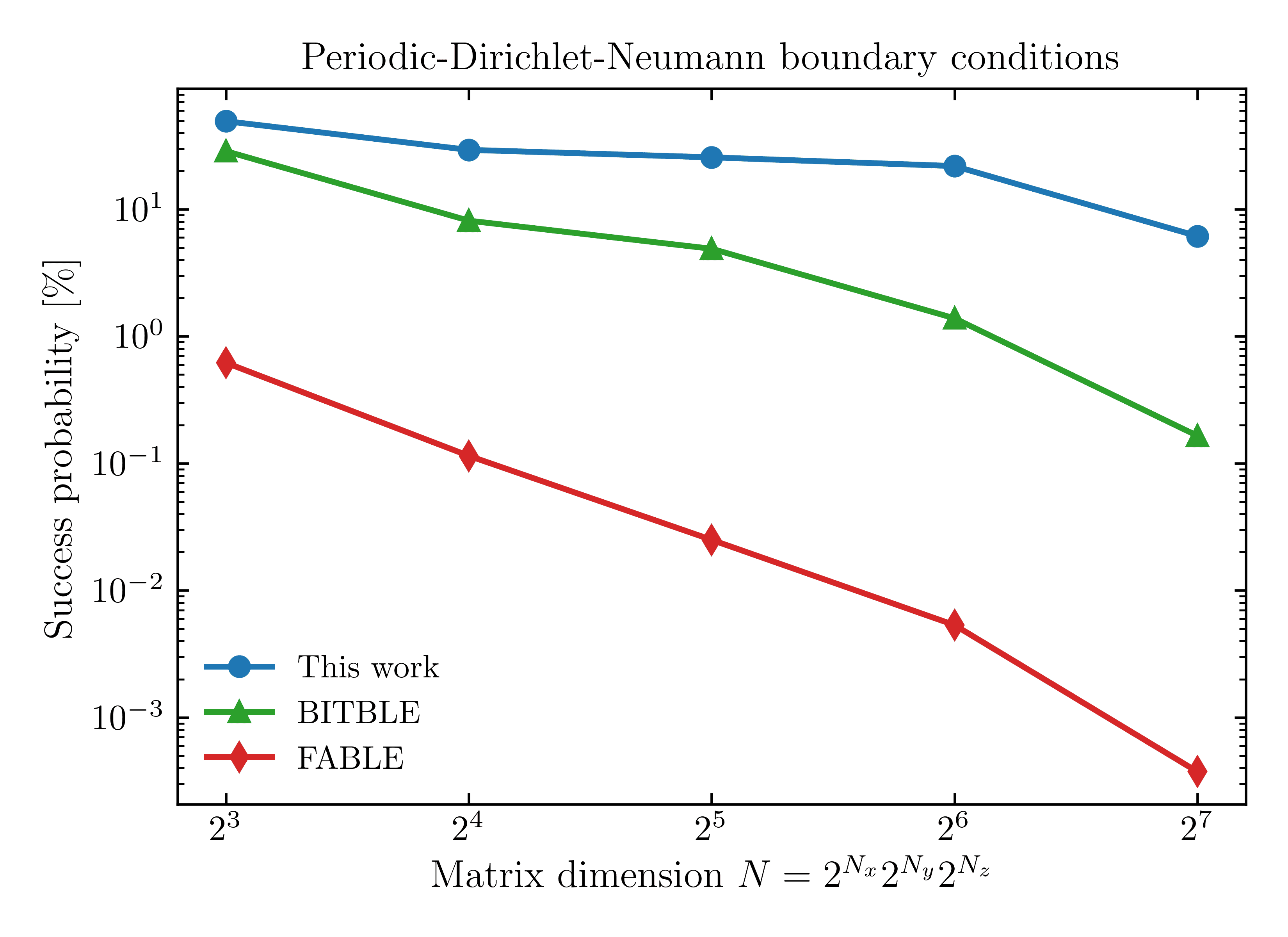}
    \caption{Success probability of block encoding constructions for the $3$D Laplacian with periodic, Dirichlet, and Neumann boundary conditions.}
    \label{fig:SP_3D_PDN}
\end{figure}

\section{Conclusion}
\label{conclusion}

In this work, we presented a unified framework for explicitly block encoding discrete Laplacian operators arising from finite-difference discretizations. Previous structure and sparsity aware explicit constructions addressed only specific settings such as $1$D Laplacians with Dirichlet or periodic boundary conditions, $2$D Laplacians with Dirichlet boundaries, or $D$-dimensional Laplacians with fully periodic boundaries and uniform grids. Moreover, these constructions were developed independently for particular configurations and were not formulated within a common framework that naturally extends across boundary conditions and spatial dimensions. 

The framework introduced in this work bridges this gap by providing a single modular construction that supports Dirichlet, periodic, and Neumann boundary conditions in arbitrary spatial dimensions while allowing different boundary conditions and grid sizes to be specified independently along each coordinate axis. Analytical resource estimates show that the proposed construction achieves favorable scaling. In particular, circuits obtained from our construction exhibit consistently lower depth and gate counts when transpiled to an IBM quantum hardware gate set, while simulations using the \texttt{Aer} statevector backend indicate higher success probabilities across the tested problem sizes.

The framework introduced here opens several directions for future work. One natural extension is the incorporation of more general boundary conditions, such as mixed Robin boundary conditions, within the same block encoding architecture. Another promising direction is the application of similar structure exploiting techniques to other classes of structured sparse operators beyond Laplacians, including operators arising from higher-order differential discretizations or graph-based models. Finally, it will be important to study how the present block encodings impact the performance of downstream quantum algorithms, particularly quantum singular value transformation (QSVT) based methods for solving linear system of equations.

\begin{acknowledgments}
This work was carried out as part of the Qiskit Advocate Mentorship Program 2025, organized within the IBM Qiskit Advocate Program. The authors thank the organizers for facilitating the program and supporting collaboration within the Qiskit community.
\end{acknowledgments}

\bibliographystyle{quantum}
\bibliography{references} 

\begin{thebibliography}{10}

\bibitem{smith1993numerical}
G.~D. Smith.
\newblock ``Numerical solution of partial differential equations: Finite difference methods''.
\newblock Oxford Applied Mathematics and Computing Science Series. Clarendon Press. Oxford~(1993).
\newblock 3 edition.

\bibitem{krylov}
Tomohiro Sogabe.
\newblock ``Krylov subspace methods for linear systems''.
\newblock Volume~60 of Springer Series in Computational Mathematics.
\newblock Springer Nature. Singapore~(2022).

\bibitem{NielsenChuang2002}
Michael~A. Nielsen and Isaac~L. Chuang.
\newblock ``Quantum computation and quantum information''.
\newblock \href{https://dx.doi.org/10.1017/CBO9780511976667}{Cambridge University Press}. Cambridge~(2010).

\bibitem{morales2025quantumlinearsolverssurvey}
Mauro E.~S. Morales, Lirandë Pira, Philipp Schleich, Kelvin Koor, Pedro C.~S. Costa, Dong An, Alán Aspuru-Guzik, Lin Lin, Patrick Rebentrost, and Dominic~W. Berry.
\newblock ``Quantum linear system solvers: A survey of algorithms and applications''~(2025).
\newblock  \href{http://arxiv.org/abs/2411.02522}{arXiv:2411.02522}.

\bibitem{qspLow}
Guang~Hao Low and Isaac~L. Chuang.
\newblock ``Optimal hamiltonian simulation by quantum signal processing''.
\newblock \href{https://dx.doi.org/10.1103/PhysRevLett.118.010501}{Phys. Rev. Lett. {\bf 118}, 010501}~(2017).

\bibitem{Gily_n_2019}
András Gilyén, Yuan Su, Guang~Hao Low, and Nathan Wiebe.
\newblock ``Quantum singular value transformation and beyond: exponential improvements for quantum matrix arithmetics''.
\newblock In Proceedings of the 51st Annual ACM SIGACT Symposium on Theory of Computing.
\newblock \href{https://dx.doi.org/10.1145/3313276.3316366}{Page 193–204}.
\newblock STOC ’19. ACM~(2019).

\bibitem{camps2022fable}
Daan Camps and Roel Van~Beeumen.
\newblock ``{FABLE}: Fast approximate quantum circuits for block-encodings''~(2022).
\newblock  \href{http://arxiv.org/abs/2205.00081}{arXiv:2205.00081}.

\bibitem{qram}
B.~David Clader, Alexander~M. Dalzell, Nikitas Stamatopoulos, Grant Salton, Mario Berta, and William~J. Zeng.
\newblock ``Quantum resources required to block-encode a matrix of classical data''.
\newblock \href{https://dx.doi.org/10.1109/TQE.2022.3231194}{IEEE Transactions on Quantum Engineering {\bf 3}, 1--23}~(2022).

\bibitem{lcu_childs}
Andrew~M. Childs and Nathan Wiebe.
\newblock ``Hamiltonian simulation using linear combinations of unitary operations''.
\newblock Quantum Info. Comput. {\bf 12}, 901–924~(2012).

\bibitem{BITBLE}
Zexian Li, Xiao-Ming Zhang, Chunlin Yang, and Guofeng Zhang.
\newblock ``Binary tree block encoding of classical matrix''.
\newblock \href{https://dx.doi.org/10.1109/TQE.2025.3624699}{IEEE Transactions on Quantum Engineering {\bf 7}, 1--18}~(2026).

\bibitem{kuklinski2025efficientblockencodingsrequirestructure}
Parker Kuklinski, Benjamin Rempfer, Justin Elenewski, and Kevin Obenland.
\newblock ``Efficient block-encodings require structure''~(2025).
\newblock  \href{http://arxiv.org/abs/2509.19667}{arXiv:2509.19667}.

\bibitem{low2019hamiltonian}
Guang~Hao Low and Isaac~L. Chuang.
\newblock ``Hamiltonian simulation by qubitization''.
\newblock Quantum {\bf 3}, 163~(2019).

\bibitem{camps_explicit_2023}
Daan Camps, Lin Lin, Roel Van~Beeumen, and Chao Yang.
\newblock ``Explicit quantum circuits for block encodings of certain sparse matrices''~(2023).
\newblock  \href{http://arxiv.org/abs/2203.10236}{arXiv:2203.10236}.

\bibitem{S_nderhauf_2024}
Christoph Sünderhauf, Earl Campbell, and Joan Camps.
\newblock ``Block-encoding structured matrices for data input in quantum computing''.
\newblock \href{https://dx.doi.org/10.22331/q-2024-01-11-1226}{Quantum {\bf 8}, 1226}~(2024).

\bibitem{sturm2025efficientexplicitblockencoding}
Andreas Sturm and Niclas Schillo.
\newblock ``Efficient and explicit block encoding of finite difference discretizations of the laplacian''~(2025).
\newblock  \href{http://arxiv.org/abs/2509.02429}{arXiv:2509.02429}.

\bibitem{javadiabhari2024quantumcomputingqiskit}
Ali Javadi-Abhari, Matthew Treinish, Kevin Krsulich, Christopher~J. Wood, Jake Lishman, Julien Gacon, Simon Martiel, Paul~D. Nation, Lev~S. Bishop, Andrew~W. Cross, Blake~R. Johnson, and Jay~M. Gambetta.
\newblock ``Quantum computing with qiskit''~(2024).
\newblock  \href{http://arxiv.org/abs/2405.08810}{arXiv:2405.08810}.

\bibitem{tdc28_qamp2025}
TDC28.
\newblock ``laplacian\_beqc''.
\newblock \url{https://github.com/TDC28/laplacian_beqc}~(2025).
\newblock GitHub repository.

\bibitem{grover2002creatingsuperpositionscorrespondefficiently}
Lov Grover and Terry Rudolph.
\newblock ``Creating superpositions that correspond to efficiently integrable probability distributions''~(2002).
\newblock  \href{http://arxiv.org/abs/quant-ph/0208112}{arXiv:quant-ph/0208112}.

\bibitem{KhattarGidney2025}
Tanuj Khattar and Craig Gidney.
\newblock ``Rise of conditionally clean ancillae for efficient quantum circuit constructions''.
\newblock \href{https://dx.doi.org/10.22331/q-2025-05-21-1752}{Quantum {\bf 9}, 1752}~(2025).

\bibitem{barenco_1995}
Adriano Barenco, Charles~H. Bennett, Richard Cleve, David~P. DiVincenzo, Norman Margolus, Peter Shor, Tycho Sleator, John~A. Smolin, and Harald Weinfurter.
\newblock ``Elementary gates for quantum computation''.
\newblock \href{https://dx.doi.org/10.1103/PhysRevA.52.3457}{Phys. Rev. A {\bf 52}, 3457--3467}~(1995).

\bibitem{mcu_2025}
Ben Zindorf and Sougato Bose.
\newblock ``Efficient implementation of multicontrolled quantum gates''.
\newblock \href{https://dx.doi.org/10.1103/8blx-nfcr}{Phys. Rev. Appl. {\bf 24}, 044030}~(2025).

\bibitem{selinger_2015}
Peter Selinger.
\newblock ``Quantum circuits of {T}-depth one''.
\newblock \href{https://dx.doi.org/10.1103/PhysRevA.87.042302}{Phys. Rev. A {\bf 87}, 042302}~(2013).

\bibitem{Gidney2015ControlledNots}
Craig Gidney.
\newblock ``Constructing large controlled {NOT}s''.
\newblock \url{https://algassert.com/circuits/2015/06/05/Constructing-Large-Controlled-Nots.html}~(2015).
\newblock Blog post.

\bibitem{rel_phase_toffoli}
Dmitri Maslov.
\newblock ``Advantages of using relative-phase toffoli gates with an application to multiple control toffoli optimization''.
\newblock \href{https://dx.doi.org/10.1103/PhysRevA.93.022311}{Phys. Rev. A {\bf 93}, 022311}~(2016).

\bibitem{adders_1996}
Vlatko Vedral, Adriano Barenco, and Artur Ekert.
\newblock ``Quantum networks for elementary arithmetic operations''.
\newblock \href{https://dx.doi.org/10.1103/PhysRevA.54.147}{Phys. Rev. A {\bf 54}, 147--153}~(1996).

\bibitem{gidney2018halving}
Craig Gidney.
\newblock ``Halving the cost of quantum addition''.
\newblock \href{https://dx.doi.org/10.22331/q-2018-06-18-74}{Quantum {\bf 2}, 74}~(2018).

\bibitem{draper2004}
Thomas~G. Draper, Samuel~A. Kutin, Eric~M. Rains, and Krysta~M. Svore.
\newblock ``A logarithmic-depth quantum carry-lookahead adder''~(2004).
\newblock  \href{http://arxiv.org/abs/quant-ph/0406142}{arXiv:quant-ph/0406142}.

\bibitem{Low_2024}
Guang~Hao Low, Vadym Kliuchnikov, and Luke Schaeffer.
\newblock ``Trading {T} gates for dirty qubits in state preparation and unitary synthesis''.
\newblock \href{https://dx.doi.org/10.22331/q-2024-06-17-1375}{Quantum {\bf 8}, 1375}~(2024).

\end{thebibliography}
\newpage

\section*{Appendix}
\appendix

\section{Structure of Discrete Laplacian Matrices}
\label{app:matrix_structure}

In this appendix we visualize the structure of the discrete Laplacian operators considered throughout the paper. Rather than constructing the matrices directly, the color maps are obtained by extracting and post-processing the top-left block of the block encoding unitary produced by our circuit construction. The resulting color maps verify that the encoded operator exhibits the expected structure and sparsity patterns associated with finite-difference discretizations under different boundary conditions and spatial dimensions.

Fig.~\ref{fig:matrix_structures_1D},~\ref{fig:matrix_structures_2D} and~\ref{fig:matrix_structures_3D} show representative examples for one-, two-, and three-dimensional discretizations respectively used in the numerical experiments.

\begin{figure}[htbp]
\centering
\begin{subfigure}{0.32\linewidth}
\centering
\includegraphics[width=\linewidth]{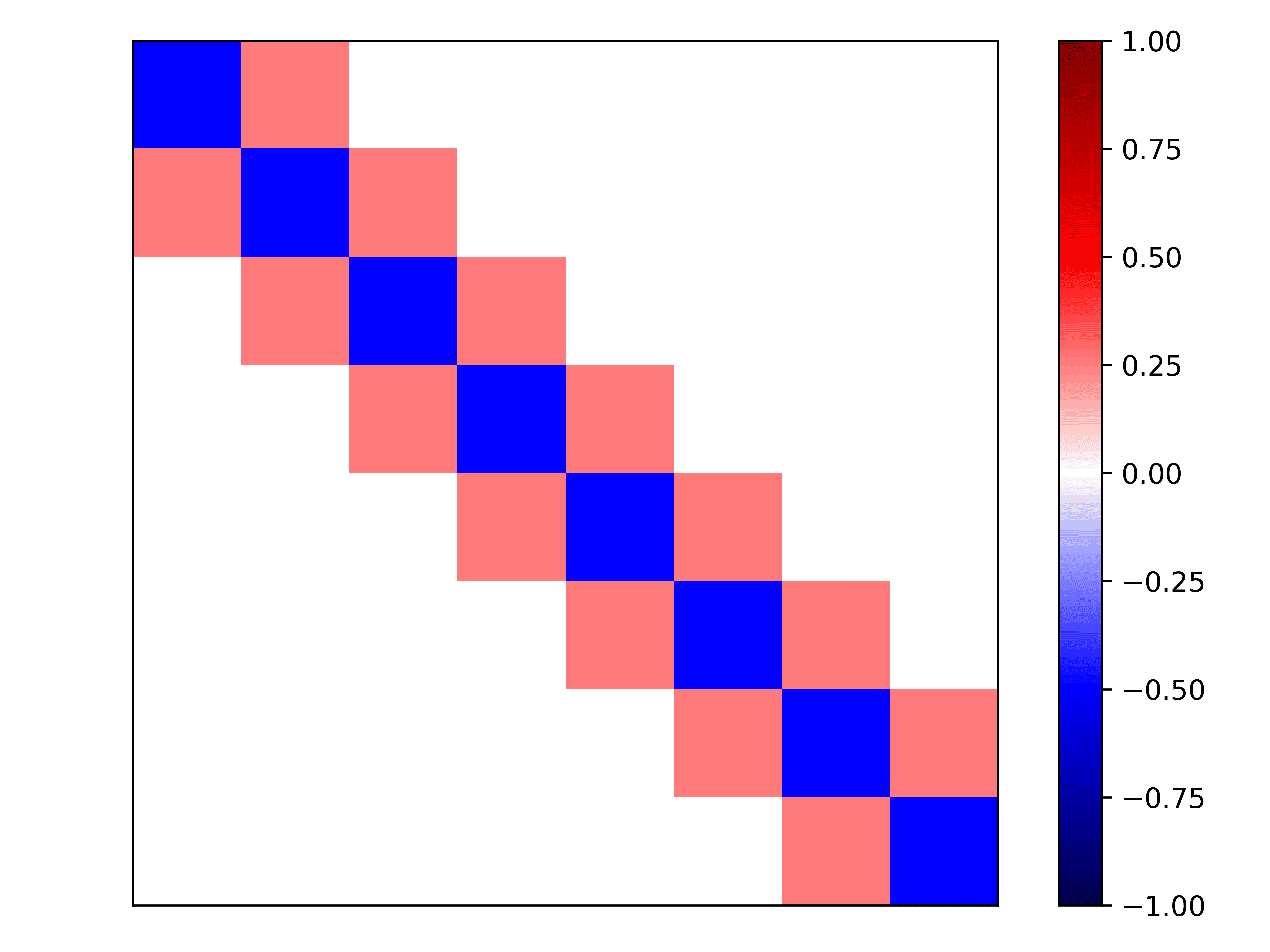}
\caption{1D Dirichlet}
\end{subfigure}
\hfill
\begin{subfigure}{0.32\linewidth}
\centering
\includegraphics[width=\linewidth]{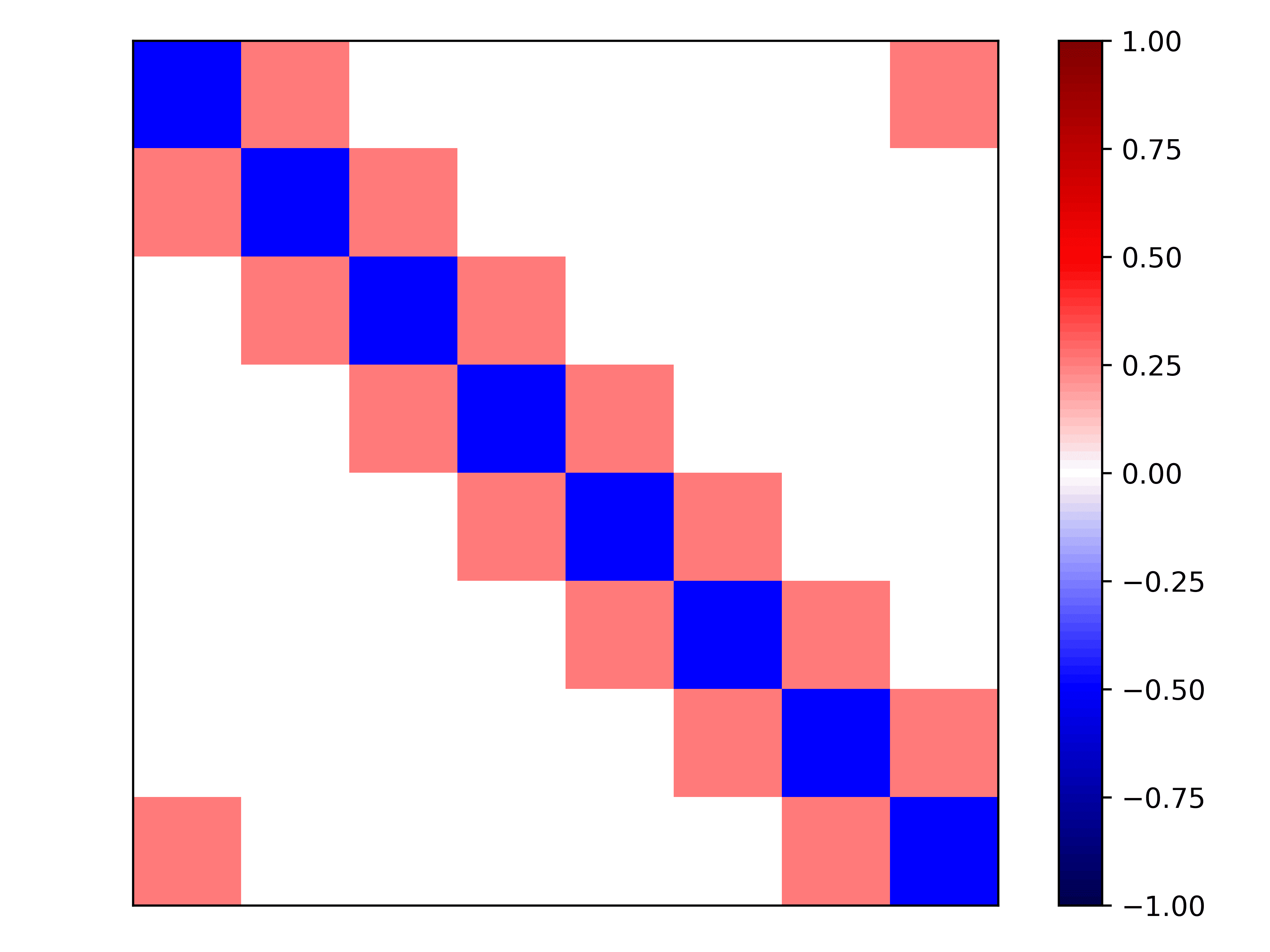}
\caption{1D periodic}
\end{subfigure}
\hfill
\begin{subfigure}{0.32\linewidth}
\centering
\includegraphics[width=\linewidth]{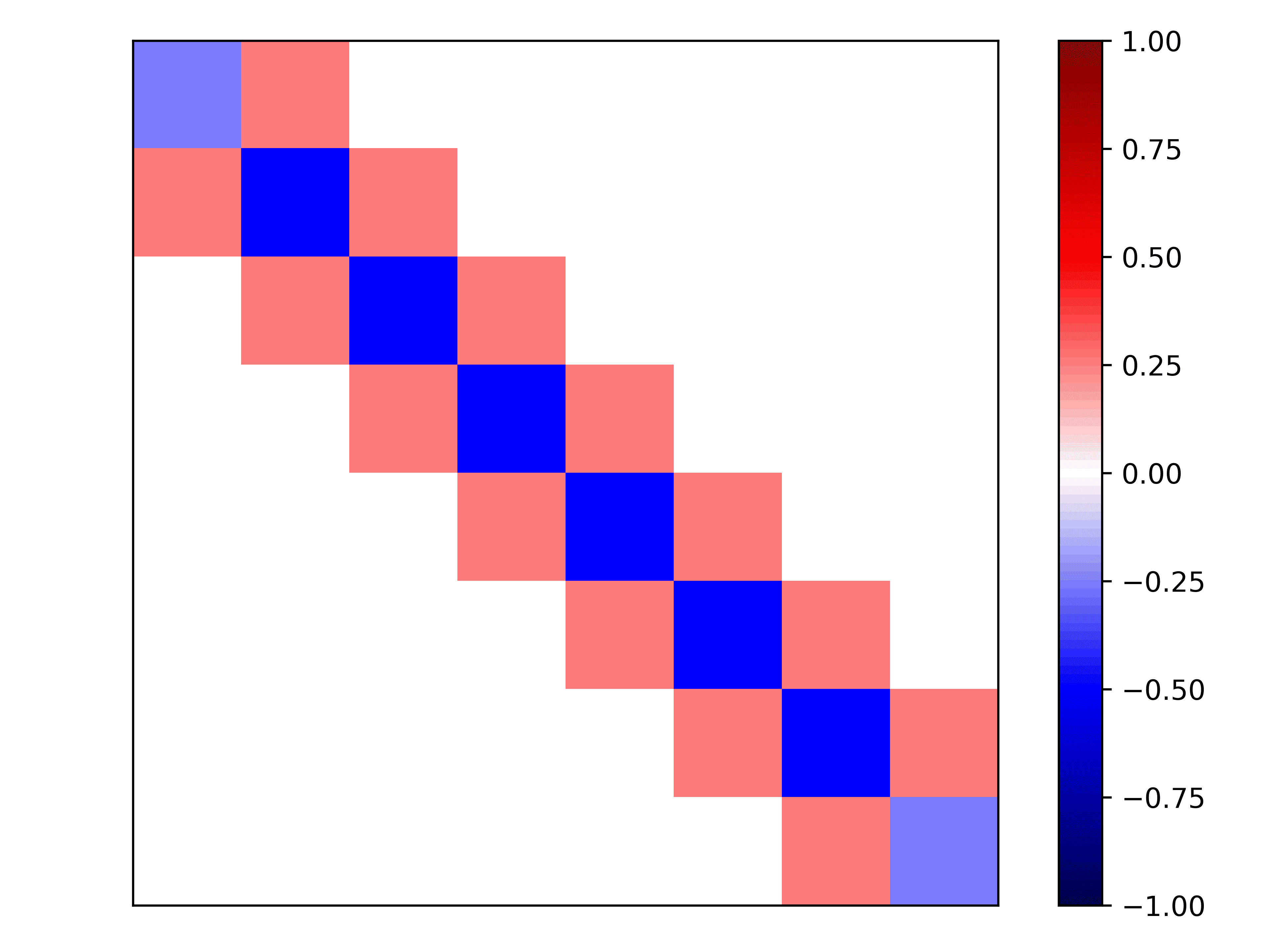}
\caption{1D Neumann}
\end{subfigure}
\caption{Heat maps of the matrices obtained from the top-left block of the block encoding unitaries $U_{\mathrm{d}}, U_{\mathrm{p}}, U_{\mathrm{n}}$.}
\label{fig:matrix_structures_1D}
\end{figure}

\begin{figure}[htbp]
\centering
\begin{subfigure}{0.44\linewidth}
\centering
\includegraphics[width=\linewidth]{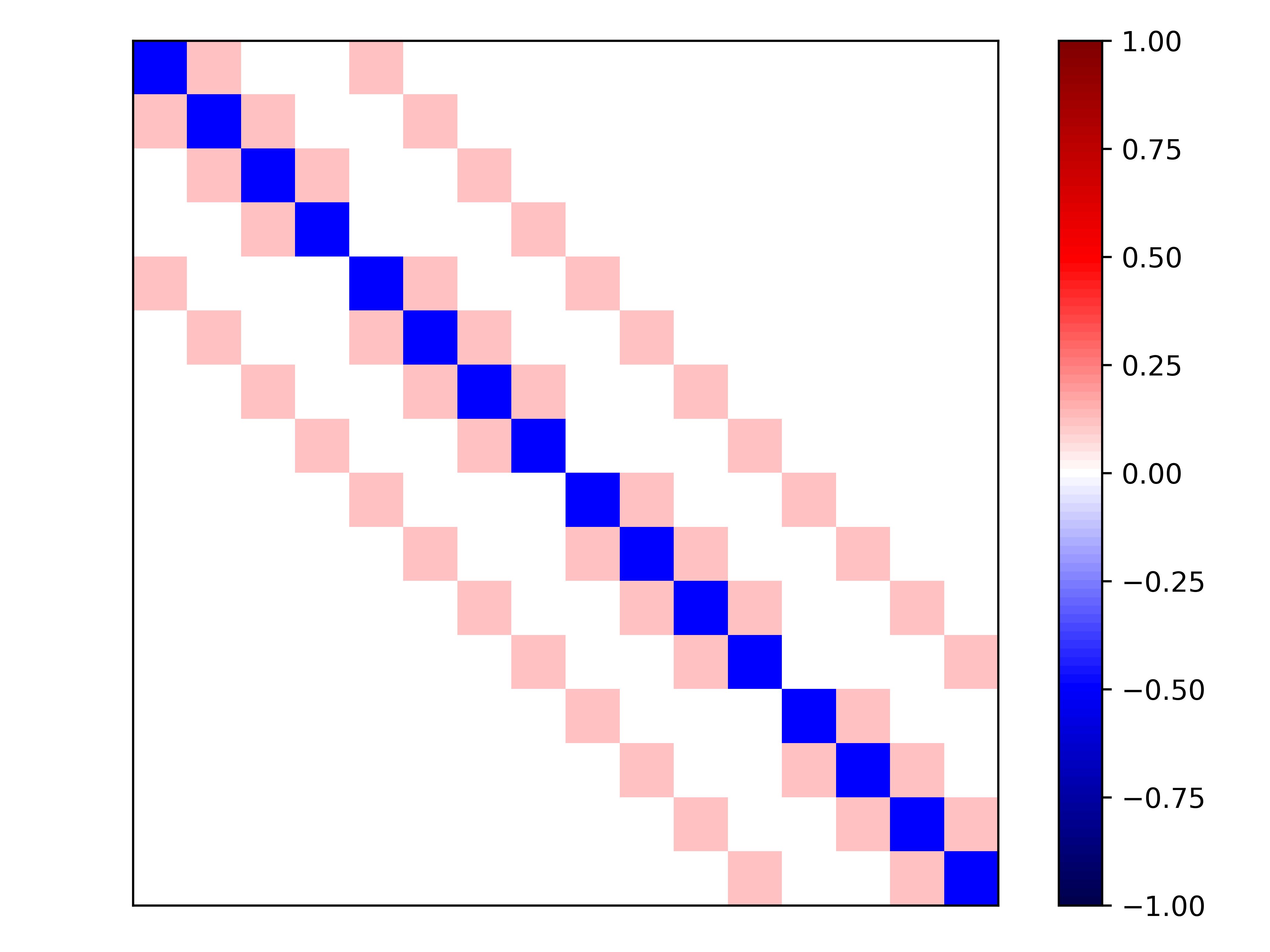}
\caption{2D Dirichlet--Dirichlet}
\end{subfigure}
\hfill
\begin{subfigure}{0.44\linewidth}
\centering
\includegraphics[width=\linewidth]{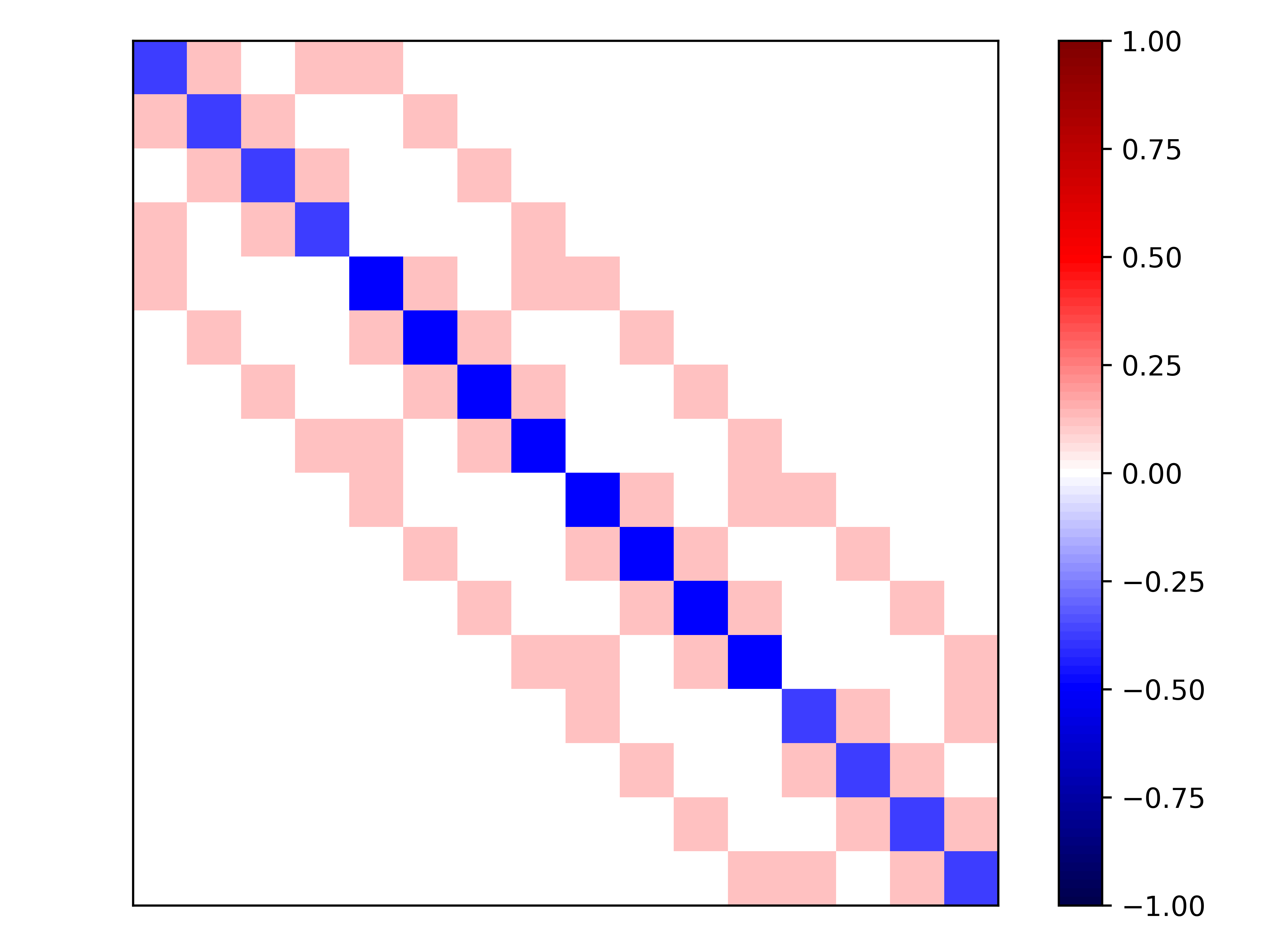}
\caption{2D periodic--Neumann}
\end{subfigure}
\caption{Heat maps of the matrices obtained from the top-left block of the block encoding unitary $U^2$ constructed in this work, for various boundary conditions considered.}
\label{fig:matrix_structures_2D}
\end{figure}

\begin{figure}[htbp]
\centering
\begin{subfigure}{0.32\linewidth}
\centering
\includegraphics[width=\linewidth]{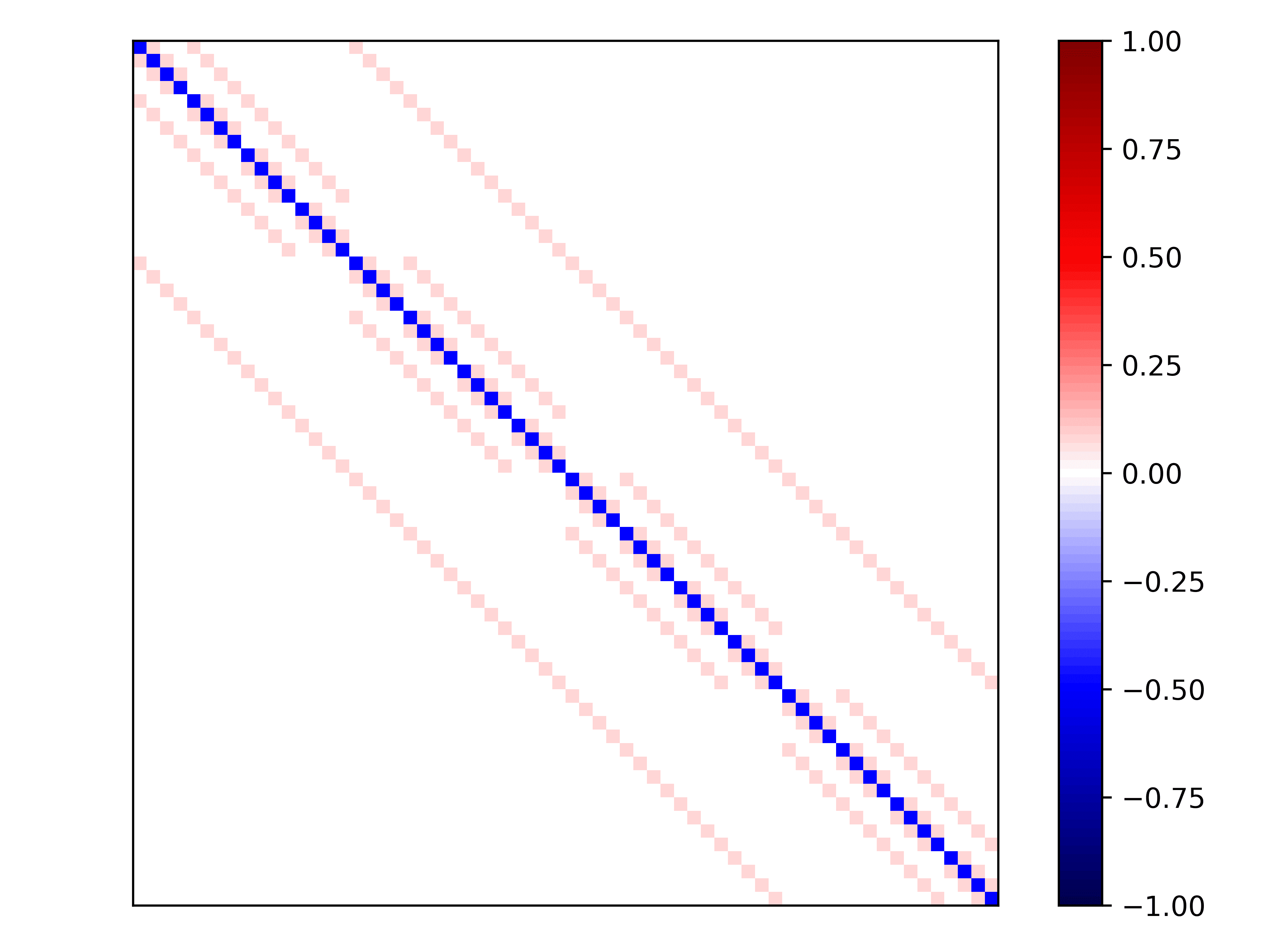}
\caption{3D Dirichlet--Dirichlet--Dirichlet}
\end{subfigure}
\hfill
\begin{subfigure}{0.32\linewidth}
\centering
\includegraphics[width=\linewidth]{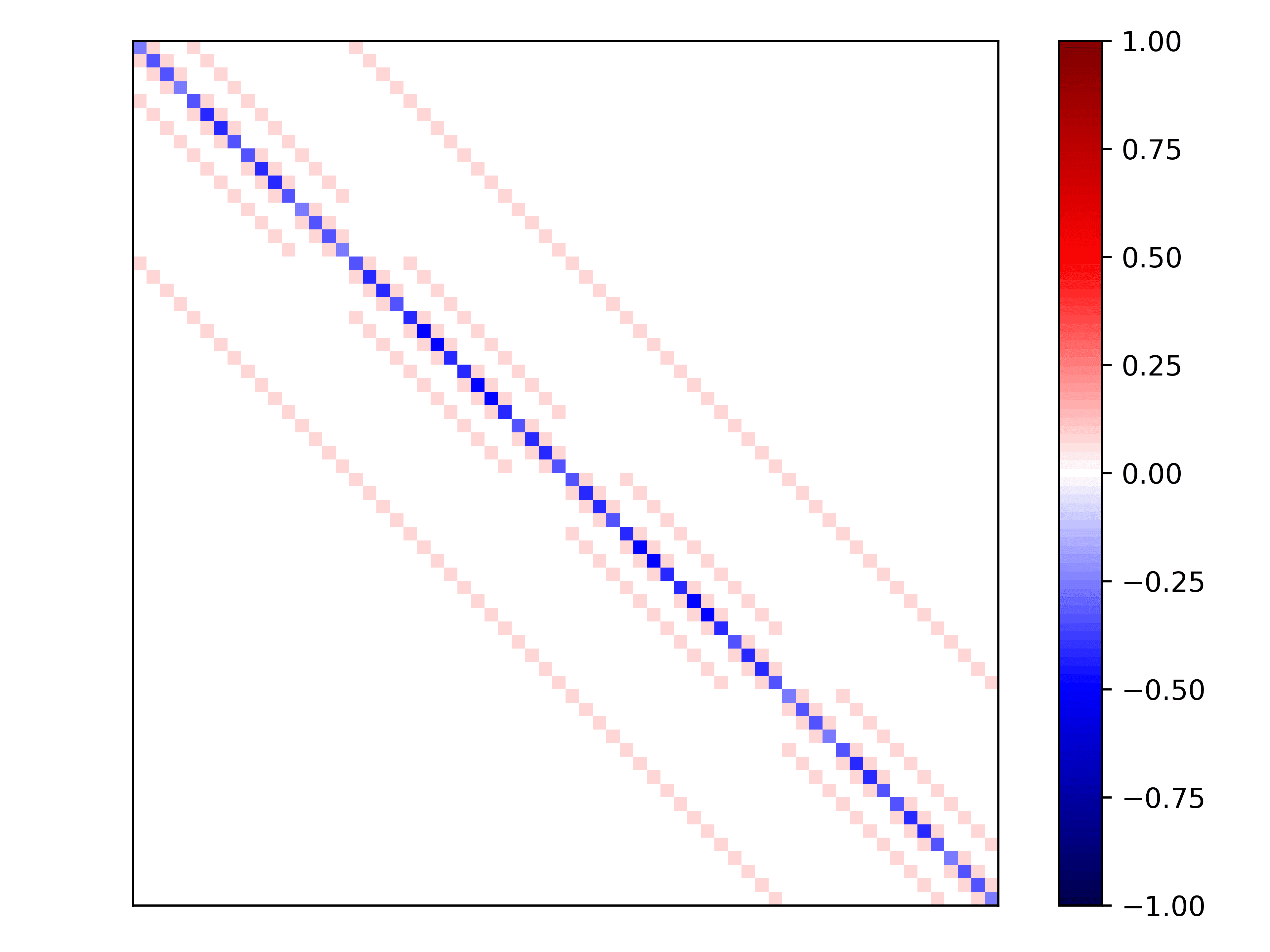}
\caption{3D Neumann--Neumann--Neumann}
\end{subfigure}
\hfill
\begin{subfigure}{0.32\linewidth}
\centering
\includegraphics[width=\linewidth]{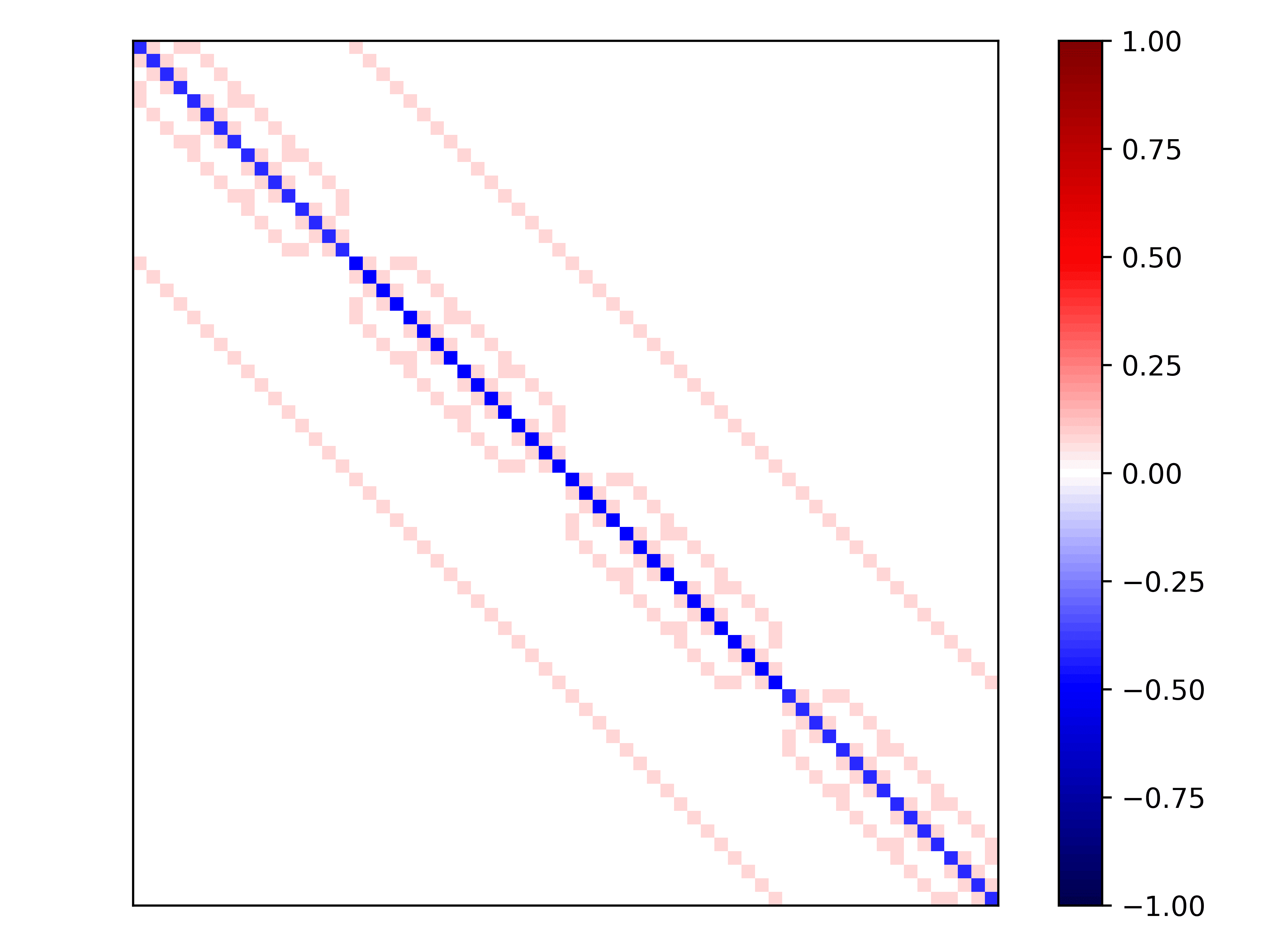}
\caption{3D periodic--Dirichlet--Neumann}
\end{subfigure}
\caption{Heat maps of the matrices obtained from the top-left block of the block encoding unitary $U^3$ constructed in this work, for various boundary conditions considered.}
\label{fig:matrix_structures_3D}
\end{figure}

\section{Additional Comparisons}
\label{app:be_additional}

This appendix presents additional benchmarking results that complement the analysis in the main text.

Figure~\ref{fig:2D_FB_PN} shows circuit-resource scaling for the $2$D Laplacian with mixed periodic--Neumann boundary conditions. The corresponding success probabilities are reported in Fig.~\ref{fig:SP_2D_PN}. To further illustrate the behavior of the proposed framework in higher dimensions, Fig.~\ref{fig:3D_DDD} presents benchmarking results for the $3$D Laplacian with Dirichlet boundary conditions along all spatial directions.

\begin{figure}[htbp]
    \centering
    \includegraphics[width=\linewidth]{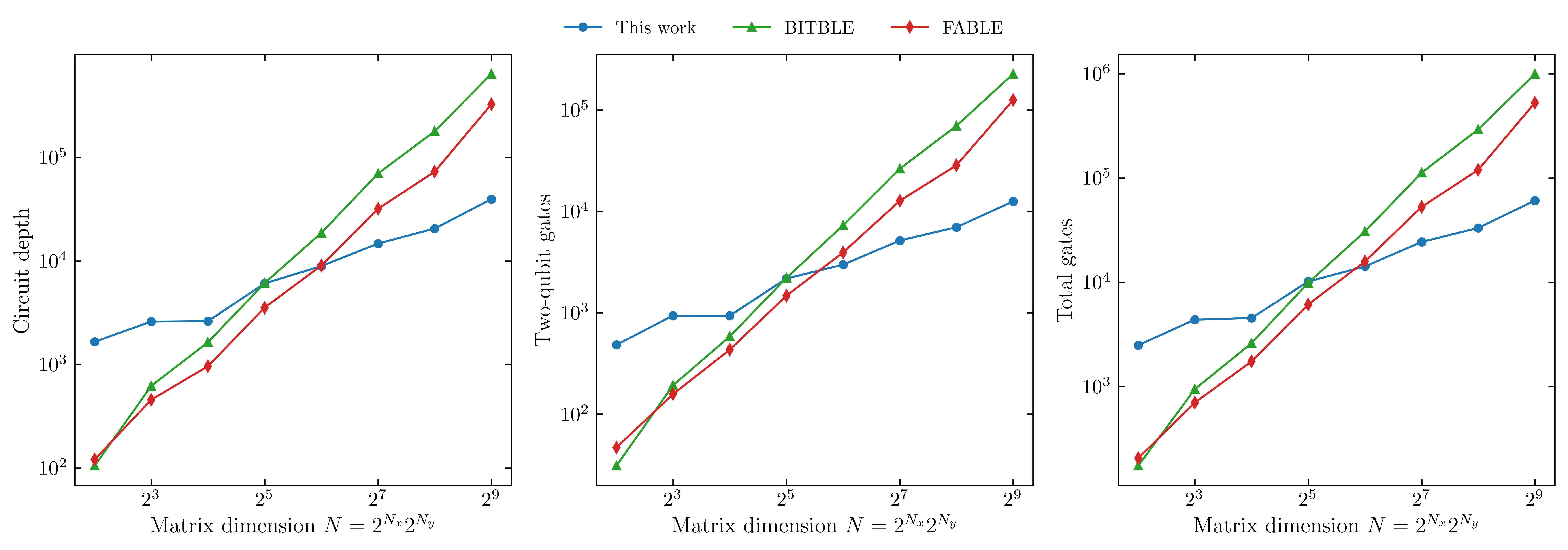}
    \caption{Comparison of circuit resources for block encoding constructions of the $2$D Laplacian with periodic boundary conditions in one spatial direction and Neumann boundary conditions in the other.}
    \label{fig:2D_FB_PN}
\end{figure}

\begin{figure}[htbp]
    \centering
    \includegraphics[width=0.6\linewidth]{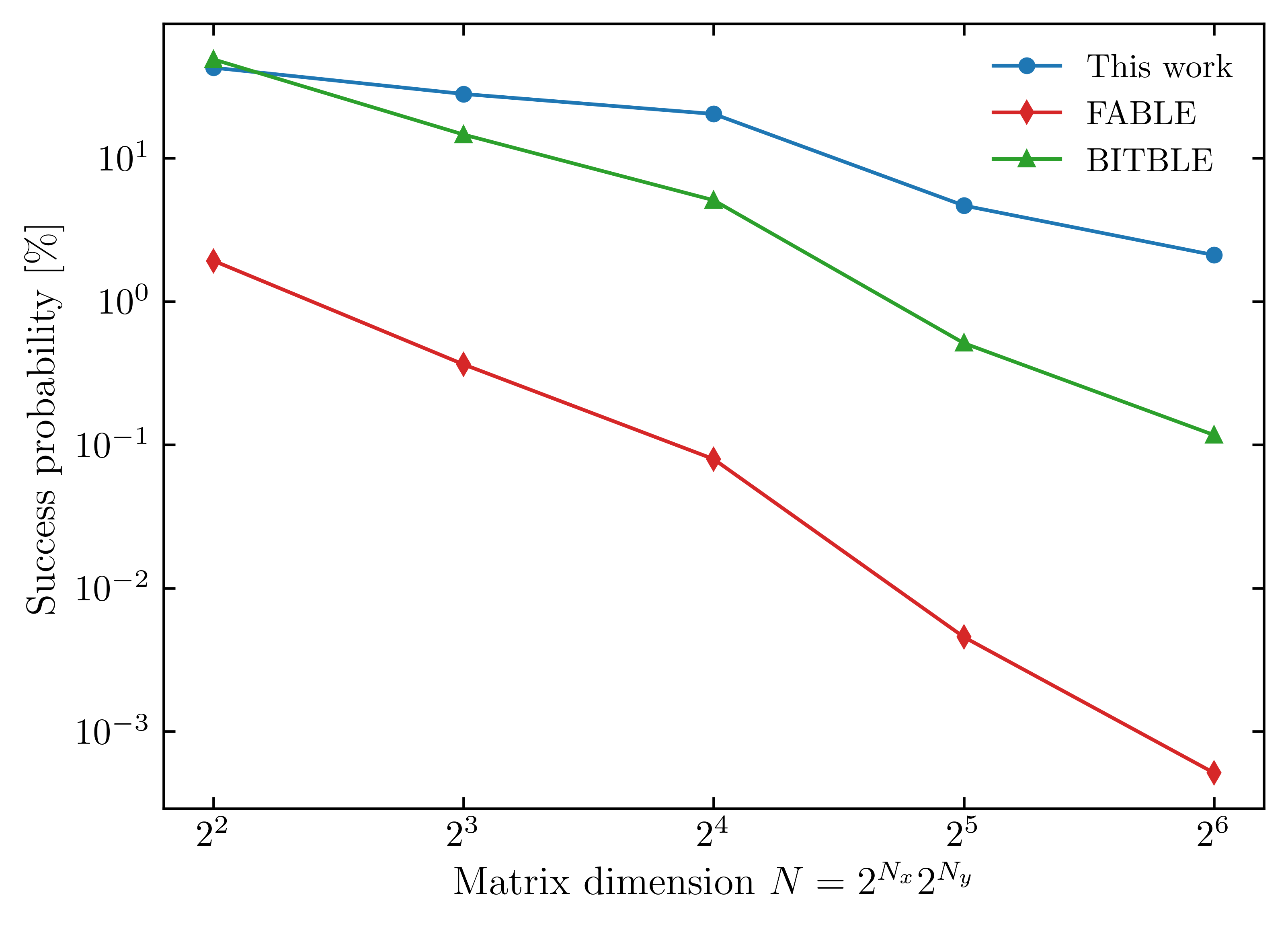}
    \caption{Success probability of the block encoding constructions for the $2$D Laplacian with periodic--Neumann boundary conditions.}
    \label{fig:SP_2D_PN}
\end{figure}

\begin{figure}[htbp]
    \centering
    \includegraphics[width=\linewidth]{figs/3d_laplacian_ddd_comparison.png}
    \caption{Comparison of circuit resources for block encoding constructions of the $3$D Laplacian with Dirichlet boundary conditions along all spatial directions.}
    \label{fig:3D_DDD}
\end{figure}

\begin{figure}[htbp]
    \centering
    \includegraphics[width=\linewidth]{figs/3d_laplacian_nnn_comparison.png}
    \caption{Comparison of circuit resources for block encoding constructions of the $3$D Laplacian with Neumann boundary conditions along all spatial directions.}
    \label{fig:3D_NNN}
\end{figure}

\begin{figure}[htbp]
\centering
\begin{subfigure}{0.48\linewidth}
\centering
\includegraphics[width=\linewidth]{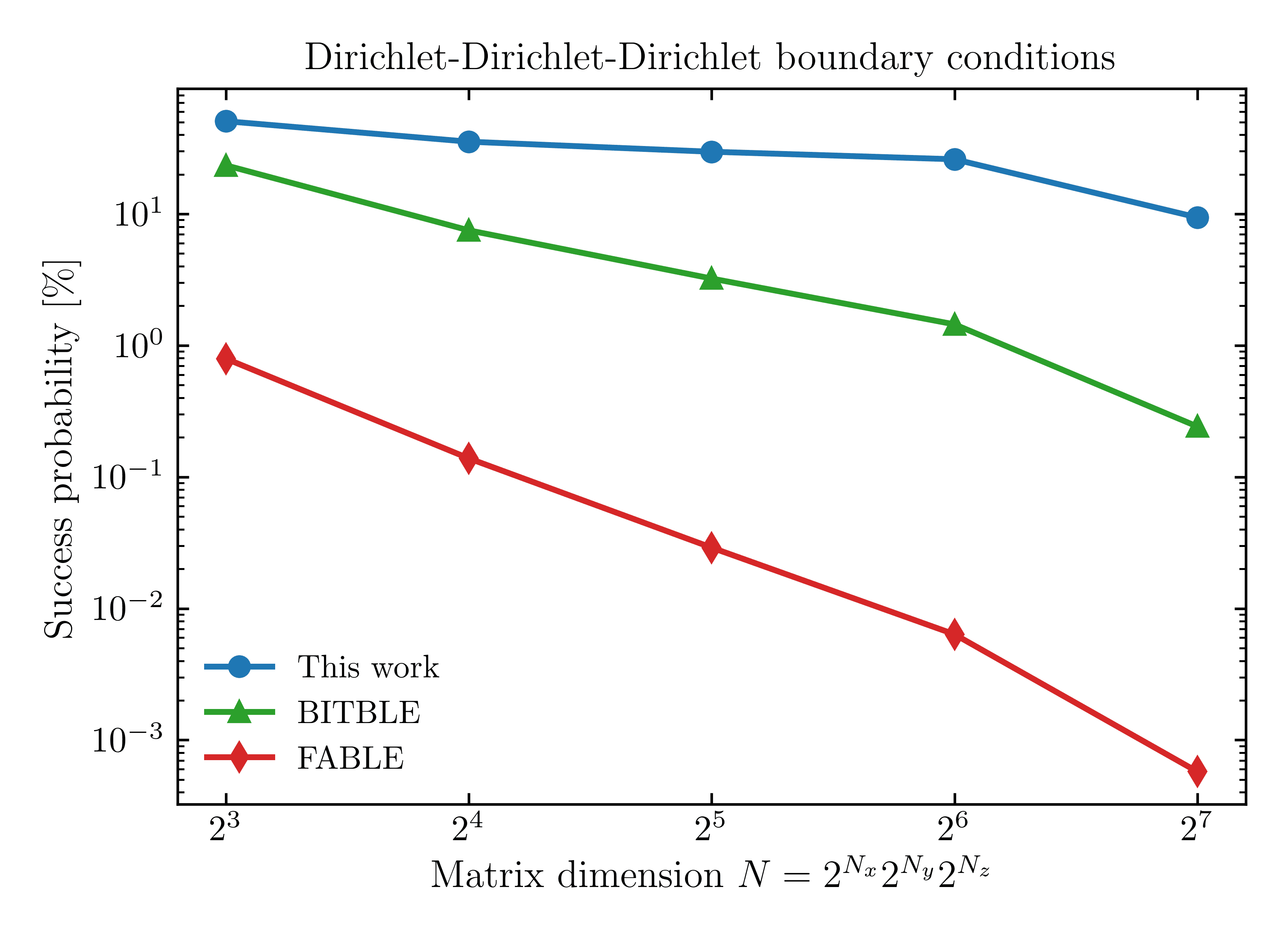}
\caption{Dirichlet-Dirichlet-Dirichlet}
\end{subfigure}
\hfill
\begin{subfigure}{0.48\linewidth}
\centering
\includegraphics[width=\linewidth]{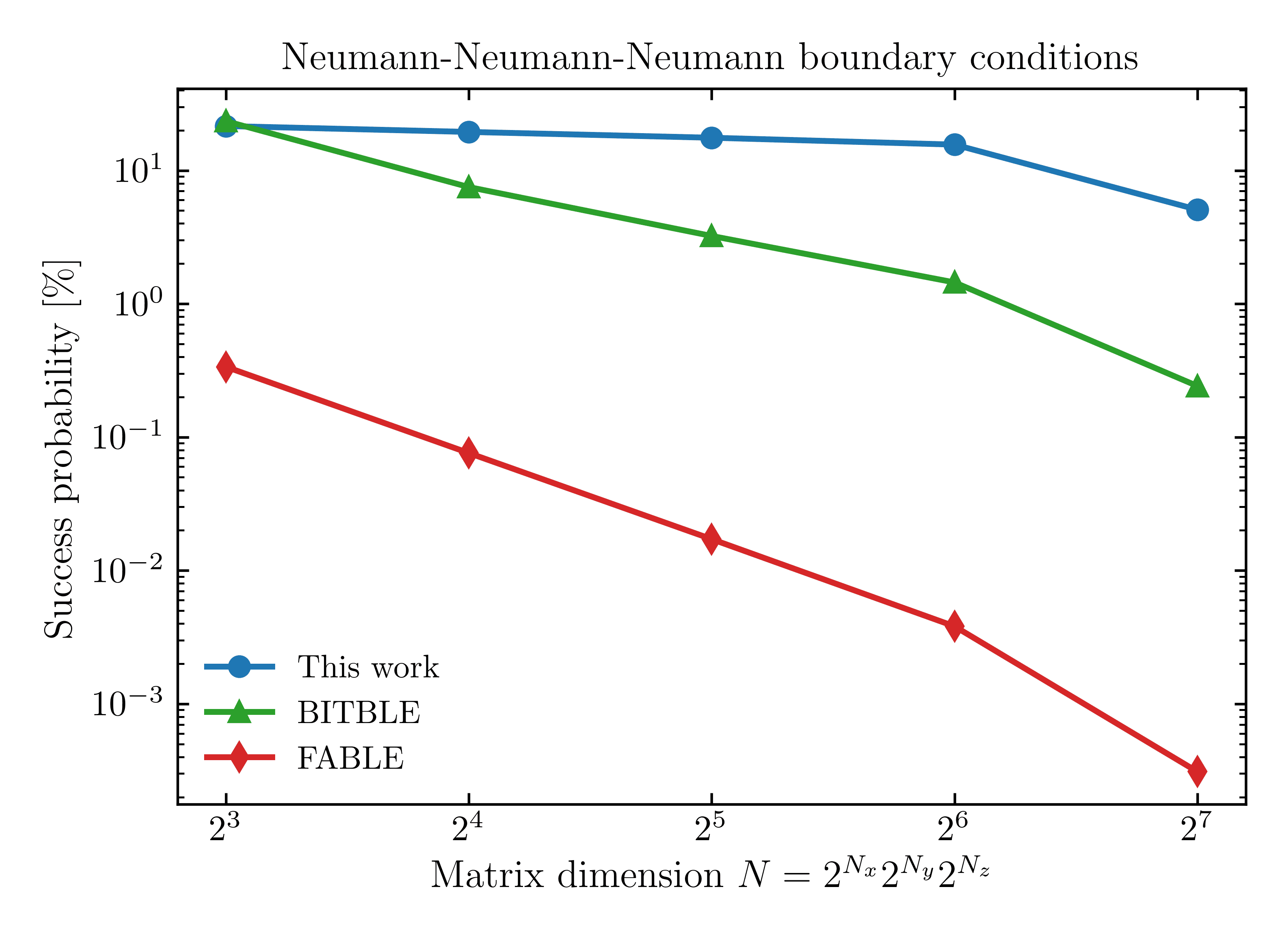}
\caption{Neumann-Neumann-Neumann}
\end{subfigure}
\caption{Success probability of the block encoding constructions for the $D$ dimensional Laplacian with various boundary conditions.}
\label{fig:SP_3D_DDD_NNN}
\end{figure}

\end{document}